\documentclass[12pt]{article} 

\usepackage[margin=1in]{geometry}
\usepackage[colorlinks,citecolor=blue,urlcolor=blue]{hyperref}

\usepackage{xfrac}
\usepackage{natbib}
\usepackage{dsfont}
\usepackage{amsmath}
\usepackage{booktabs}
\usepackage{mathrsfs}
\usepackage{amsfonts}
\usepackage{amssymb}
\usepackage{wrapfig}
\usepackage{mathtools}
\usepackage{graphicx}
\usepackage{amsthm}
\usepackage{fancyhdr}
\usepackage{enumitem}
\usepackage{algorithmic}
\usepackage{algorithm}
\usepackage{wrapfig}
\usepackage{booktabs}
\usepackage{multirow}
\usepackage{caption}
\usepackage{amsthm}
\usepackage{tikz}
\usepackage[toc,page]{appendix}
\usepackage{subcaption}
\usepackage{xspace}
\usepackage{setspace}
\usepackage{soul}

\newcommand{\ddr}{\mathrm{d}}
\newcommand{\e}{\mathrm{e}}

\newcommand{\R}{\mathds{R}}
\newcommand{\E}{\mathds{E}}

\newcommand{\simiid}{\stackrel{\mathrm{iid}}{\sim}}
\newcommand{\simind}{\stackrel{\mathrm{ind}}{\sim}}

\DeclareMathOperator{\Tr}{Tr}

\newcommand{\Xv}{\mathbf{X}}
\newcommand{\bv}{\mathbf{b}}
\newcommand{\Bm}{\mathbf{B}}
\newcommand{\Sm}{\mathbf{S}}
\newcommand{\xv}{\mathbf{x}}
\newcommand{\yv}{\mathbf{y}}

\newcommand{\tv}{\mathbf{t}}
\newcommand{\mv}{\mathbf{m}}
\newcommand{\thetav}{\boldsymbol{\theta}}
\newcommand{\lambdav}{\boldsymbol{\lambda}}
\newcommand{\muv}{\boldsymbol{\mu}}
\newcommand{\Sigmam}{\boldsymbol{\Sigma}}
\newcommand{\piv}{\boldsymbol{\pi}}
\newcommand{\Cm}{\mathbf{C}}

\newcommand{\Xm}{\mathbf{X}}
\newcommand{\diag}{\text{diag}}

\usepackage{mdframed}

\newtheorem{lemma}{Lemma}
\newtheorem{prop}{Proposition}

\newtheorem{theo}{Theorem}
\theoremstyle{definition}

\AtEndEnvironment{exmp}{\null\hfill\qedsymbol}

\allowdisplaybreaks

\usepackage{setspace}

\usepackage{authblk}

\title{\textbf{Dirichlet process mixtures \\ under affine transformations of the data}}

\author[1]{Julyan Arbel} 
\affil[1]{\small Universit\'e Grenoble Alpes, Inria, CNRS, Grenoble INP, LJK, 38000 Grenoble, France\\ \texttt{julyan.arbel@inria.fr}}
%
\author[2]{Riccardo Corradin}
\affil[2]{Department of Economics, Management and Statistics, University of Milano Bicocca, Italy\\ \texttt{riccardo.corradin@unimib.it} $\quad$ \texttt{bernardo.nipoti@unimib.it}}
%
\author[2]{Bernardo Nipoti}

\date{}

\begin{document}
\doublespacing
\maketitle
%
%
%
%
%
%


\begin{abstract}
Location-scale Dirichlet process mixtures of Gaussians (DPM-G) have proved extremely useful in dealing with density estimation and clustering problems in a wide range of domains. Motivated by an  astronomical application, in this work we address the robustness of DPM-G models to affine transformations of the data, a natural requirement for any sensible statistical method for density estimation and clustering. First, we devise a coherent prior specification of the model which makes posterior inference invariant with respect to affine transformations of the data. Second, we formalise the notion of asymptotic robustness under data transformation and show that mild assumptions on the true data generating process are sufficient to ensure that DPM-G models feature such a property. 
		Our investigation is supported by an extensive simulation study and illustrated by the analysis of an astronomical dataset consisting of physical measurements of stars in the field of the globular cluster NGC~2419.
\end{abstract}
\textbf{Keyword} 
Affine data transformations; Asymptotic robustness; Dirichlet process mixture models; Clustering; Multivariate density estimation.
%
%


\section{Introduction}\label{sec:intro}
	
	A natural requirement for statistical methods for density estimation and clustering is for them to be robust under affine transformations of the data. Such a desideratum is exacerbated in multivariate problems where data components are incommensurable, that is not measured in the same physical unit, and for which, thus, the definition of a metric on the sample space requires the specification of constants relating units along different axes. As an illustrative example, consider astronomical data consisting of position and velocity of stars, thus living in the so-called phase-space: a metric on such a space can be defined by setting a dimensional constant to relate positions and velocities. In this setting, any sensible statistical procedure should be robust with respect to the specification of such a constant \citep{ascasibar2005numerical,maciejewski2009phase}. 
	This is specially important considering that often scarce to no a priori guidance about dimensional constants might be available, thus making the model 
	calibration a daunting task. The motivating example of this work comes indeed from astronomy,  the dataset we consider consisting of measurements on a set of 139 stars, possibly belonging to a globular cluster called NGC~2419 \citep{Iba11}. 
	Globular clusters are sets of stars orbiting some galactic center. The NGC~2419, showed in Figure~\ref{fig:glob}, is one of the furthest known globular clusters in the Milky Way. 
	\begin{figure}[h]
		\centering
		\includegraphics[width=.9\textwidth]{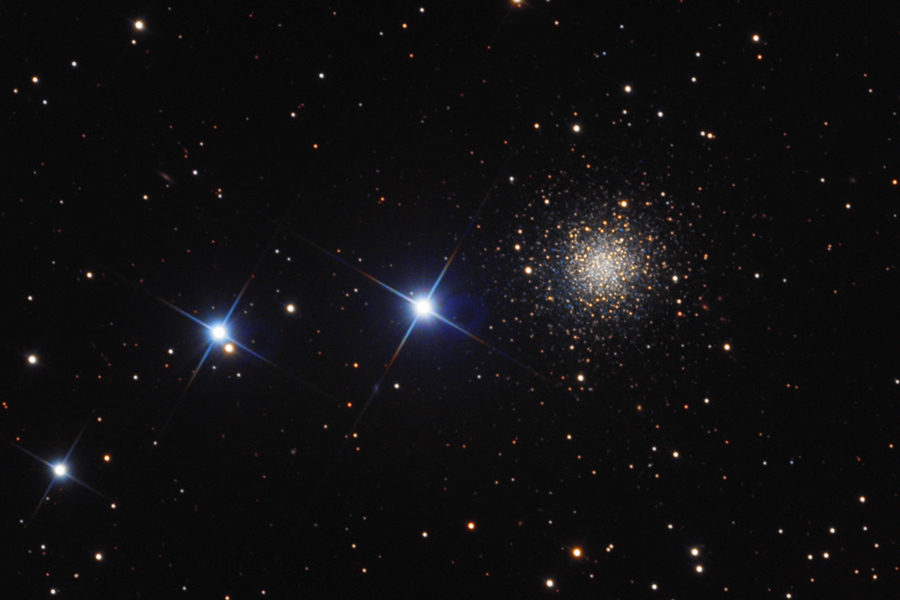}
		\caption{An image of the remote Milky Way globular cluster NGC~2419 (about 300\,000 light years away from the solar system). Picture by Bob Franke, with permission (www.bf-astro.com).} 
		\label{fig:glob}
	\end{figure}
	For each star we observe a four-dimensional vector $(Y_1,Y_2,V,[{\rm Fe/H}])$, where $(Y_1,Y_2)$ is a two-dimensional projection on the plane of the sky of the position of the star, $V$ is its line of sight velocity and $[{\rm Fe/H}]$ its metallicity, a measure of the abundance of iron relative to hydrogen. Out of these four components, only $Y_1$ and $Y_2$ are measured in the same physical unit, while dimensional constants need to be specified in order to relate position, velocity and metallicity. A key question arising with these
	data consists in identifying the stars that, among the 139 observed,
	can be rightfully considered as belonging to NGC~2419: a correct classification would be pivotal in the study of the
	globular cluster dynamics. Astronomers expect the large majority of the
	observed stars to belong to the cluster: the remaining ones, called field stars or contaminants, are
	Milky Way stars, unrelated to the cluster, that happen to appear projected  
	in the same region of the plane of the sky. In general the contaminants have different kinematic and chemical
	properties with respect to the cluster members. Considering the nature of the problem, this research question can be formalised as an unsupervised classification problem, the goal being the identification of the stars which belong to the largest cluster, which can be interpreted as the NGC~2419 globular cluster. 
	Admittedly, the terms of such a classification problem are not limited to the considered dataset but, on the contrary, are ubiquitous in astronomy and, more in general, might arise in any field where data components are incommensurable.
	
	Bayesian nonparametric methods for density estimation and clustering have been successfully applied in a wide range of fields, including genetics \citep{huelsenbeck2007inference}, bioinformatics \citep{medvedovic2002bayesian}, clinical trials \citep{xu2017decision}, econometrics \citep{otranto2002nonparametric}, to cite but a few. In this work we focus on the Dirichlet process mixture (DPM) model introduced by \citet{Lo84}, arguably the most popular Bayesian nonparametric model. Although its properties have been thoroughly studied \citep[see, e.g.,][]{hjort2010bayesian}, little attention has been dedicated to its robustness under data transformations \citep[see][]{arbel2013bayesian}. To the best of our knowledge, only \citet{bean2016transformations} and \citet{shi2018} study the effect of data transformation under a DPM model. The goal of \citet{bean2016transformations} is to transform the sample so to facilitate the estimation of univariate densities on a new scale and thus to improve the performance of the methodology; \citet{shi2018}, instead, study the consistency of DPM models under affine data transformation, when investigating the properties of the so-called low information omnibus prior for DPM models they introduce.
	
	In this paper we investigate the effect of affine transformations of the data on location-scale DPM of multivariate Gaussians (DPM-G) \citep{Mue96}, which will be introduced in Section~\ref{sec:model}. This is a very commonly used class of DPM models whose asymptotic properties have been studied by \citet{Wu10} 
	and \citet{Can17}, among others.  While rescaling the data, often for numerical convenience, is a common practice, the robustness of multivariate DPM-G models under such transformations remains essentially unaddressed to date. We fill this gap by formally studying robustness properties for a flexible specification of DPM-G models, under affine transformations of the data. Specifically, our contribution is two-fold: first, we formalise the intuitive idea that a location-scale DPM-G model on a given dataset induces a location-scale DPM-G model on rescaled data and we provide the parameters mapping for the transformed DPM-G model; second, we introduce the notion of asymptotic robustness under affine transformations of the data and show that, under mild assumptions on the true data generating process, DPM-G models feature such robustness property. As a by-product, we show that the original assumptions of  \citet{Wu10} and \citet{Can17} for ensuring posterior consistency of Dirichlet process mixtures can be simplified by removing a redundant assumption regarding the finite entropy of the model. This result, proven in Lemma~\ref{lem:improve}, can be of independent interest.
	
	Our theoretical results are supported by an extensive simulation study, focusing on both density and clustering estimation. These findings make the DPM-G model a suitable candidate to deal with problems where an informed choice of the relative scale of different dimensions seems prohibitive. We thus fit a DPM-G model to the NGC~2419 dataset and show that it provides interesting insight on the classification problem motivating this work.
	
	The rest of the paper is organised as follows. In Section~\ref{sec:model} we describe the modelling framework and introduce the notation used throughout the paper. Section~\ref{sec:theory} presents the main results of the work, with two-fold focus on finite sample properties on the one hand, and large sample asymptotics on the other. A thorough simulation study is presented in Section~\ref{sec:simulation} while Section~\ref{sec:astro} is dedicated to the analysis of the NGC~2419 dataset. Conclusions are discussed in Section~\ref{sec:conclusions}. Proofs of all results are postponed to Appendix~\ref{sec:proofs}.
	
	\section{Modelling framework}\label{sec:model}
	Let $\Xv^{(n)}:=(\Xv_1,\dots,\Xv_n)$ be a sample 
	of $d$-dimensional observations $\Xv_i:= (X_{i,1},\ldots,X_{i,d})^\intercal$ defined on some probability space $(\Omega,\mathscr{A},\mathds{P})$ and taking values in $\R^d$. 
	Consider an invertible affine transformation $g:\R^d \longrightarrow \R^d$, that is $g(\xv)=\Cm\xv+\bv$ where $\Cm$ is an invertible matrix of dimension $d\times d$ and $\bv$ 
	a $d$-dimensional column vector. The nature of the transformation $g$ is such that, if applied to a random vector $\Xv$ with probability density function $f$, it gives rise to a new random vector $g(\Xv)$ with probability density function $f_g=|\det(\Cm)|^{-1} f\circ g^{-1}$.
	
	Henceforth we denote by $\mathscr{F}$ the space of all density functions with support on $\R^d$.  
	The DPM model \citep{Lo84} 
	defines a random density taking values in $\mathscr{F}$ as 
	\begin{equation*}
	\tilde{f}(\xv) = \int_{\Theta} k(\xv;\thetav)\ddr \tilde P(\thetav),
	\end{equation*} 
	where $k(\xv;\thetav)$ is a kernel on $\R^d$ parameterized by $\thetav\in\Theta$, $\tilde P$ is a Dirichlet process (DP)  with parameters $\alpha$ (precision parameter) and $P_0:=\E[\tilde P]$  (base measure),  a distribution defined on $\Theta$  \citep{Fer73}. The almost sure discreteness of $\tilde P$ allows the random density $\tilde f$ to be rewritten as
	\begin{equation*}
	\tilde{f}(\xv) = \sum_{i=1}^{\infty} w_i k(\xv;\thetav_i),
	\end{equation*} 
	where the random atoms $\thetav_i$ are i.i.d.~from $P_0$, and the random jumps $w_i$, independent of the atoms, admit the following stick-breaking representation \citep{Set94}: given a set of random weights $v_i\simiid\text{Beta}(1,\alpha)$ (independent of the atoms $\thetav_i$), then $w_1=v_1$ and, for $j\geq 2$, $w_j=v_j \prod_{i=1}^{j-1}(1-v_i)$. While several kernels  $k(\xv;\thetav)$ have been considered in the literature, including e.g. skew-normal \citep{canale2016bayesian}, Weibull \citep{kottas2006nonparametric}, Poisson \citep{krnjajic2008parametric}, 
	here we focus on the convenient and commonly adopted Gaussian specification of  \citet{Esc95} and \citet{Mue96}. In the latter case, $k(\xv;\thetav)$ represents a $d$-dimensional Gaussian kernel $\phi_d(\xv; \muv,\Sigmam)$, provided that $\thetav=(\muv,\Sigmam)$, where the column vector $\muv$ and the matrix $\Sigmam$ represent, respectively, mean vector and covariance matrix of the Gaussian kernel. This specification defines the model referred to as $d$-dimensional location-scale Dirichlet process mixture of Gaussians (DPM-G), which can be represented in hierarchical form as 
	\begin{gather}
	\Xv_i\mid \thetav_i=(\muv_i,\Sigmam_i) \simind \phi_d(\xv_i; \muv_i,\Sigmam_i),\nonumber\\
	\thetav_i\mid \tilde P \simiid \tilde P,\label{eq:DPmmls}\\
	\tilde P \sim DP(\alpha,P_0).\nonumber
	\end{gather} 
	
	The almost sure discreteness of $\tilde P$ implies that the vector $\thetav^{(n)}:=(\thetav_1,\ldots,\thetav_n)$ might show ties with positive probability, thus leading to a partition of $\thetav^{(n)}$ into $K_n\leq n$ distinct values. This, in turn, leads to a partition of the set of observations $\Xv^{(n)}$, obtained by grouping two observations $\Xv_{i_1}$ and $\Xv_{i_2}$ together if and only if $\thetav_{i_1}=\thetav_{i_2}$. This observation implies that the posterior distribution of the random density $\tilde f$ carries useful information on the clustering structure of the data, thus making DPM-G models convenient tools for density and clustering estimation problems. 
	
	Although other specifications for the base measure can be considered \citep[see, e.g.,][]{Gor10}, we choose to work within the framework set forth by \cite{Mue96} where $P_0$ is defined as the product of two independent distributions for the location parameter $\muv$ and the scale parameter $\Sigmam$, namely a multivariate normal and an inverse-Wishart distribution, that is 
	\begin{equation}\label{eq:base}
	P_0(\ddr \muv, \ddr \Sigmam;\piv)= 
	N_d(\ddr \muv; \mv_0, \Bm_0) \times IW(\ddr \Sigmam; \nu_0, \Sm_0).
	\end{equation}
	For the sake of compactness, we use the notation $\piv:= (\mv_0, \Bm_0, \nu_0, \Sm_0)$ to denote the vector of hyperparameters characterising the base measure $P_0$. 
	We denote by $\Pi$ the prior distribution  induced on $\mathscr{F}$ by the DPM-G model \eqref{eq:DPmmls} with base measure \eqref{eq:base}.
	
	\section{Theoretical results\label{sec:theory}}
	
	\subsection{DPM-G model and affine transformations of the data}\label{sec:affine}
	
	Let $\tilde{f}_{\piv}$ be a DPM-G model defined in \eqref{eq:DPmmls}, with base measure \eqref{eq:base} and hyperparameters $\piv$. The next result shows that, for any invertible affine transformation $g(\xv)= \Cm \xv+\bv$, there exists a specification $\piv_{g}:=(\mv_0^{(g)},\Bm_0^{(g)},\nu_0^{(g)},\Sm_0^{(g)})$ of the hyperparameters characterising the base measure in \eqref{eq:base}, 
	such that the deterministic relation $\tilde f_{\piv_g}=|\det(\Cm)|^{-1}\tilde f_{\piv}\circ  g^{-1}$ holds. That is, for every $\omega\in\Omega$ and given a random vector $\Xv$ distributed according to $\tilde{f}_{\piv}(\omega)$, we have that $\tilde{f}_{\piv_g}(\omega)$ is the density of the transformed random vector $g(\Xv)$.  
	
	\begin{prop}\label{prop:exa}
		Let $\tilde{f}_{\piv}$ be a location-scale DPM-G model 
		defined as in \eqref{eq:DPmmls}, with base measure \eqref{eq:base} and hyperparameters $\piv= (\mv_0, \Bm_0, \nu_0, \Sm_0)$. For any invertible affine transformation $g(\xv)=\Cm\xv+\bv$, we have the deterministic relation
		\begin{equation*}
		\tilde f_{\piv_g}=|\det(\Cm)|^{-1}\tilde f_{\piv}\circ  g^{-1},
		\end{equation*}
		where  $\piv_{g}
		:=(\Cm \mv_0+\bv,\Cm \Bm_0 \Cm^\intercal,\nu_0,\Cm \Sm_0 \Cm^\intercal)$.
	\end{prop}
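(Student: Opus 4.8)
The plan is to realise both random densities on a single probability space and to compare them pointwise, exploiting the almost sure discreteness of the Dirichlet process. Using the stick-breaking representation recalled in Section~\ref{sec:model}, I would write $\tilde f_{\piv}(\xv)=\sum_{i\ge1}w_i\,\phi_d(\xv;\muv_i,\Sigmam_i)$, where the weights $w_i$ are built from $v_i\simiid\mathrm{Beta}(1,\alpha)$ and hence depend on $\alpha$ alone, while the atoms $\thetav_i=(\muv_i,\Sigmam_i)$ are i.i.d.\ from the base measure $P_0(\cdot;\piv)$ in \eqref{eq:base} and independent of the weights. On the same $(\Omega,\mathscr{A},\PR)$, keeping the very same weights $w_i$ and the very same atoms $\thetav_i$, I would introduce the transformed atoms $\thetav_i^{(g)}:=h(\thetav_i)$ with $h(\muv,\Sigmam):=(\Cm\muv+\bv,\,\Cm\Sigmam\Cm^\intercal)$, and the random density $\tilde f^{(g)}(\xv):=\sum_{i\ge1}w_i\,\phi_d(\xv;\thetav_i^{(g)})$. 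The two things to establish are then: (a) $\tilde f^{(g)}$ is a version of $\tilde f_{\piv_g}$; and (b) $\tilde f^{(g)}=|\det(\Cm)|^{-1}\,\tilde f_{\piv}\circ g^{-1}$ holds for every $\omega\in\Omega$ under this coupling.

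For step (a), since the stick-breaking weights are untouched it is enough to show that $h(\thetav_1)$ has law $P_0(\cdot;\piv_g)=N_d(\Cm\mv_0+\bv,\Cm\Bm_0\Cm^\intercal)\times IW(\nu_0,\Cm\Sm_0\Cm^\intercal)$: the $\thetav_i^{(g)}$ are then i.i.d.\ from this distribution, so by construction $\tilde f^{(g)}$ is a location-scale DPM-G as in \eqref{eq:DPmmls} with hyperparameters $\piv_g$. Because $h$ acts on the two coordinates separately, the required distributional statement factorises. The location marginal is the elementary affine-transformation property of the multivariate Gaussian, $\muv\sim N_d(\mv_0,\Bm_0)\Rightarrow\Cm\muv+\bv\sim N_d(\Cm\mv_0+\bv,\Cm\Bm_0\Cm^\intercal)$. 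For the scale marginal I would pass to the precision matrix: $\Sigmam\sim IW(\nu_0,\Sm_0)$ means $\Sigmam^{-1}$ is Wishart with $\nu_0$ degrees of freedom and scale $\Sm_0^{-1}$; since $(\Cm\Sigmam\Cm^\intercal)^{-1}=(\Cm^{-1})^\intercal\Sigmam^{-1}\Cm^{-1}$, the standard linear-transformation property of the Wishart distribution ($A W A^\intercal$ is again Wishart with the same degrees of freedom and scale matrix $A(\cdot)A^\intercal$) applied with the invertible matrix $A=(\Cm^{-1})^\intercal$ gives $(\Cm\Sigmam\Cm^\intercal)^{-1}\sim\mathrm{Wishart}(\nu_0,(\Cm\Sm_0\Cm^\intercal)^{-1})$, i.e.\ $\Cm\Sigmam\Cm^\intercal\sim IW(\nu_0,\Cm\Sm_0\Cm^\intercal)$, with $\nu_0$ unchanged. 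Independence of the two transformed marginals is inherited from the independence under $P_0$, so the joint law of $h(\thetav_1)$ is exactly the product $P_0(\cdot;\piv_g)$.

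For step (b), the one computation that matters is the change-of-variables identity for the Gaussian kernel,
\begin{equation*}
|\det(\Cm)|^{-1}\,\phi_d\!\left(g^{-1}(\xv);\muv,\Sigmam\right)=\phi_d\!\left(\xv;\Cm\muv+\bv,\Cm\Sigmam\Cm^\intercal\right),\qquad \xv\in\R^d,
\end{equation*}
which is just the fact that an invertible affine map carries $N_d(\muv,\Sigmam)$ to $N_d(\Cm\muv+\bv,\Cm\Sigmam\Cm^\intercal)$, written at the level of densities with Jacobian $|\det(\Cm)|^{-1}$. Multiplying this identity by $w_i(\omega)$, summing over $i$ — the series converge and may be rearranged since $\sum_i w_i(\omega)=1$ almost surely — and using $\thetav_i^{(g)}=h(\thetav_i)$ yields $\tilde f^{(g)}(\xv)(\omega)=|\det(\Cm)|^{-1}\,\tilde f_{\piv}(g^{-1}(\xv))(\omega)$ for every such $\omega$, which is the asserted deterministic relation once $\tilde f^{(g)}$ is identified with $\tilde f_{\piv_g}$ via step (a).

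None of the individual steps is genuinely hard; the points needing care are the inverse-Wishart transformation law for a general, not necessarily orthogonal, invertible $\Cm$ — in particular checking that, under the parametrisation in \eqref{eq:base}, the scale matrix transforms as $\Sm_0\mapsto\Cm\Sm_0\Cm^\intercal$ while the degrees of freedom $\nu_0$ stay fixed — and a precise reading of the clause ``for every $\omega\in\Omega$'': the proposition is really an assertion about the existence of the coupled construction above, under which the two random densities agree $\omega$ by $\omega$, rather than a claim that two independently sampled realisations coincide.
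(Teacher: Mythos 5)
Your proposal is correct and follows essentially the same route as the paper: the Gaussian-kernel change-of-variables identity combined with the affine-transformation laws of the multivariate normal and inverse-Wishart distributions, so that the transformed atoms are i.i.d.\ from $P_0(\cdot;\piv_g)$ and the identity holds $\omega$ by $\omega$. The only cosmetic difference is that you make the pushforward of the Dirichlet process explicit through the stick-breaking series (same weights, transformed atoms), whereas the paper performs the same change of variables directly inside the mixture integral $\int \phi_d(\xv;\muv,\Sigmam)\,\tilde P(\ddr\muv,\ddr\Sigmam;\piv)$.
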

	
	While Proposition~\ref{prop:exa} can be derived from general properties of the Dirichlet process \citep[see][]{lijoi2009distributional}, a direct proof is provided in Appendix~\ref{subsec:pro1}. This result implies that, for any invertible affine transformation $g$, modelling the set of observations $\Xv^{(n)}$ with a DPM-G model \eqref{eq:DPmmls}, with base measure \eqref{eq:base} and hyperparameters $\piv$, is equivalent with assuming the same model 
	with transformed hyperparameters $\piv_g$, for the transformed observations $g(\Xv)^{(n)}:=(g(\Xv_1),\ldots,g(\Xv_n))$. As a by-product, the same posterior inference can be drawn conditionally on both the original and the transformed set of observations, as the conditional distribution of the random density $\tilde f_{\piv_g}$, given $g(\Xv)^{(n)}$, coincides with the conditional distribution of $|\det(\Cm)|^{-1}\tilde{f}_{\piv} \circ g^{-1} $, given $\Xv^{(n)}$. Proposition~\ref{prop:exa} thus provides a formal justification for the procedure of transforming data, e.g. via standardisation or normalisation, often adopted to achieve numerical efficiency: as long as the prior specification of the hyperparameters of a DPM-G model respects the condition of Proposition~\ref{prop:exa}, 
	transforming the data does not affect posterior inference.
	
	\subsubsection{Empirical Bayes approach}
	
	The elicitation of an honest prior, thus independent of the data, for the hyperparameters $\piv$ of the base measure \eqref{eq:base} of a DPM model is in general a difficult task. A popular practice, therefore, consists in setting the hyperparameters equal to some empirical estimates $\hat{\piv}(\Xv^{(n)})$, by applying the so-called  empirical Bayes approach \citep[see, e.g.,][]{Leh06}. Recent investigations \citep{Pet14,Don18} provide a theoretical justification of this hybrid procedure by shedding light on its asymptotic properties. We show here that this procedure satisfies the assumptions of Proposition~\ref{prop:exa} and, thus, guarantees that posterior Bayesian inference, under an empirical Bayes approach, is not affected by affine transformations to the data.\\

	A commonly used empirical Bayes approach for specifying the hyperparameters $\piv$ of a DPM-G model, defined as in \eqref{eq:DPmmls} and \eqref{eq:base}, consists in setting
		\begin{equation}\label{eq:EB}
		\mv_0=\overline{\Xv},\qquad\quad \Bm_0=\frac{1}{\gamma_1}\Sm_\Xv^2,\qquad\quad \Sm_0= \frac{\nu_0 - d - 1}{\gamma_2} \Sm_\Xv^2,
		\end{equation}
		where $\overline{\Xv}=\sum_{i=1}^n \Xv_i/n$ and $\Sm_\Xv^2=\sum_{i=1}^n(\Xv_i-\overline{\Xv})(\Xv_i-\overline{\Xv})^\intercal/(n-1)$ are the sample mean vector and the sample covariance matrix, respectively, and $\gamma_1,\gamma_2>0$, $\nu_0>d+1$. 
		This specification for the hyperparameters $\piv$ has a straightforward interpretation. Namely, the parameter $\mv_0$, mean of the prior guess distribution of $\muv$, can be interpreted as the overall mean value and, in absence of available prior information, set equal to the observed sample mean. Similarly, the parameter $\Bm_0$, covariance matrix of the prior guess distribution of $\muv$, is set equal to a penalised version of the sample covariance matrix $\Sm^2_\Xv$, where $\gamma_1$ takes on the interpretation of the size of the ideal prior sample upon which the prior guess on the distribution of $\muv$ is based. Similarly, the hyperparameter $\Sm_0$ is set equal to a penalised version of the sample covariance matrix $\Sm_\Xv^2$, choice that corresponds to the prior guess that the covariance matrix of each component of the mixture coincides with a rescaled version of the sample covariance matrix. Specifically, $\Sm_0= \Sm^2_\Xv (\nu_0 - d - 1)/\gamma_2$ follows by setting $\E[\Sigmam]=\Sm_\Xv^2/\gamma_2$ and observing that, by standard properties of the inverse-Wishart distribution, 
		$\E[\Sigmam]=\Sm_0/(\nu_0 - d - 1)$. Finally the parameter $\nu_0$ 
		takes on the interpretation of the size of an ideal prior sample upon which the prior guess $\Sm_0$ is based. Next we focus on the setting of the hyperparameters $\piv_g$, given the transformed observations $g(\Xv)^{(n)}$. The same empirical Bayes procedure adopted in \eqref{eq:EB} leads to
		\begin{equation*}\label{eq:EB2}
		\mv_0^{(g)}=\overline{g(\Xv)}=\Cm \mv_0 + \bv,\qquad \Bm^{(g)}_0=\frac{1}{\gamma_1}\Sm_{g(\Xv)}^2,\qquad \Sm^{(g)}_0= \frac{\nu_0 - d - 1}{\gamma_2} \Sm_{g(\Xv)}^2.
		\end{equation*}
		Observing that $\Sm_{g(\Xv)}^2=\Cm \Sm_\Xv^2 \Cm^\intercal$ and setting $\nu_0^{(g)}=\nu_0$ shows that the described empirical Bayes procedure corresponds to $\piv_g=(\Cm \mv_0+\bv,\Cm \Bm_0 \Cm^\intercal,\nu_0,\Cm \Sm_0 \Cm^\intercal)$ and, thus, by Proposition~\ref{prop:exa}, $\tilde f_{\piv_g}=|\det(\Cm)|^{-1}\tilde f_{\piv}\circ  g^{-1}$.
	
	\subsection{Large $n$ asymptotic robustness}\label{sec:asymptotics}
	
	We investigate the effect of affine transformations of the data on DPM-G models by studying the asymptotic behaviour of the resulting posterior distribution in the large sample size regime. To this end, we fit the same DPM-G model $\tilde{f}_{\piv}$, defined in \eqref{eq:DPmmls} and \eqref{eq:base}, to two versions of the data, that is $\Xv^{(n)}$ and $g(\Xv)^{(n)}$, by using the exact same specification for the hyperparameters $\piv$. Under this setting, the assumptions of Proposition~\ref{prop:exa} are not met and the posterior distributions obtained by conditioning on the two sets of observations are different random distributions which, thus, might lead to different statistical conclusions. The main result of this section shows that, under mild conditions on the true generating distribution of the observations, the posterior distributions obtained by conditioning $\tilde{f}_{\piv}$ on the two sets of observations $\Xv^{(n)}$ and $g(\Xv)^{(n)}$, become more and more similar, up to an affine reparametrisation, as the sample size $n$ grows. More specifically we show that the probability mass of the joint distribution of these two conditional random densities concentrates in a neighbourhood of $\{(f_1,f_2)\in\mathcal{F}\times\mathcal{F}\text{ s.t. }f_1=|\det(\Cm)| f_2 \circ g\}$ as $n$ goes to infinity. Henceforth we will say that the DPM-G model \eqref{eq:DPmmls} with base measure \eqref{eq:base} is asymptotically robust to affine transformation of the data. The rest of the section formalises and discusses this result. 
	We consider a metric $\rho$ on $\mathscr{F}$ which can be equivalently defined as the Hellinger distance $\rho(f_1,f_2)=\{\int (\sqrt{f_1(\xv)}-\sqrt{f_2(\xv)})^2\ddr \xv \}^{1/2}$ or the $L^1$ distance $\rho(f_1,f_2)=\int |f_1(\xv)-f_2(\xv))|\ddr \xv$ between densities $f_1$ and $f_2$ in $\mathscr{F}$, and we denote by $\|\cdot\|$ the Euclidean norm on $\R^d$. Moreover, we adopt here the usual frequentist validation approach in the large $n$ regime, working `as if' the observations $\Xv^{(n)}$ were generated from a true and fixed data generating process $F^*$ \citep[see for instance][]{rousseau2016frequentist}. We introduce the notation $F_{n}^*$ to denote the $n$-fold product measure $F^*\times \cdots \times F^*$, and we assume that $F^*$ admits a density function with respect to the Lebesgue measure, denoted by  $f^*$.	In the setting we consider, the same model $\tilde{f}_{\piv}$ defined in \eqref{eq:DPmmls} and \eqref{eq:base} is fitted to $\Xv^{(n)}$ and $g(\Xv)^{(n)}$, thus leading to two distinct posterior random densities, with distributions on $\mathscr{F}$ denoted by $\Pi(\,\cdot\, \mid \Xv^{(n)})$ and $\Pi(\,\cdot\, \mid g(\Xv)^{(n)})$, respectively. 
	We use the notation $\Pi_2(\cdot\mid \Xv^{(n)})$ to refer to their joint posterior distribution on $\mathscr{F}\times\mathscr{F}$.
	
	\begin{theo}\label{thm:asym}
		Let $f^*\in\mathscr{F}$, true generating density of $\Xv^{(n)}$, satisfy the conditions
		\begin{enumerate}
			\item[{\normalfont A1}.] $0 < f^*(\xv) < M$, for some constant $M$ and for all $\xv \in \R^d$,
			\item[{\normalfont A2}.] for some $\eta > 0$, $\int \|\xv\|^{2(1+\eta)}f^*(\xv)\ddr\xv < \infty$,
			\item[{\normalfont A3}.] $\xv\mapsto f^*(\xv)\log^2(\varphi_\delta (\xv))$ is bounded on $\R^d$, where $\varphi_\delta(\xv) = \inf_{\{\tv\,:\,\|\tv-\xv\|<\delta\}}f^*(\tv)$.
		\end{enumerate}
		Let $g:\R^d \longrightarrow \R^d$ be an invertible affine transformation and $\tilde{f}_{\piv}$ be the random density 
		induced by a DPM-G as \eqref{eq:DPmmls} with base measure \eqref{eq:base} where $\nu_0>(d + 1)(2d - 3)$. 
		Then, for any $\varepsilon>0$,
		\begin{equation*}
		\Pi_2((f_1,f_2): \rho(f_1,|\det(\Cm)| f_2 \circ g)<\varepsilon \mid \Xv^{(n)})\longrightarrow 1
		\end{equation*}
		in $F_{n}^*$-probability, as $n\to\infty$.
	\end{theo}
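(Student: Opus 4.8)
The idea is to reduce the statement to two separate posterior–consistency claims linked by a change of variables. Set $f^*_g:=|\det(\Cm)|^{-1}f^*\circ g^{-1}$, the true density of the transformed sample, so that $g(\Xv)^{(n)}$ is an i.i.d.\ sample from $f^*_g$. Since $g$ is a fixed bijection, applying it componentwise pushes $F_n^*$ forward to the $n$-fold product of $f^*_g$; hence a statement holding ``in $F_n^*$-probability'' as a function of $\Xv^{(n)}$ is the same as the corresponding statement holding in $(f^*_g)_n$-probability as a function of $g(\Xv)^{(n)}$, and conversely. I will show (i) the posterior of $\tilde f_\piv$ given $\Xv^{(n)}$ concentrates at $f^*$, (ii) the posterior given $g(\Xv)^{(n)}$ concentrates at $f^*_g$, and (iii) combine the two using that $f\mapsto|\det(\Cm)|\,f\circ g$ is a $\rho$-isometry carrying $f^*_g$ to $f^*$.

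For the change-of-variables step, first I would note that for $h_1,h_2\in\mathscr{F}$ the substitution $\xv=g(\yv)=\Cm\yv+\bv$ (Jacobian $|\det(\Cm)|$) leaves both the $L^1$ and the Hellinger representations of $\rho$ invariant, i.e.
\begin{equation*}
\rho\big(|\det(\Cm)|\,h_1\circ g,\;|\det(\Cm)|\,h_2\circ g\big)=\rho(h_1,h_2),
\end{equation*}
so $T:h\mapsto|\det(\Cm)|\,h\circ g$ is an isometry of $(\mathscr{F},\rho)$ onto itself, and $T(f^*_g)=f^*$ by direct substitution. Consequently, for any densities $f_1,f_2$ the triangle inequality gives
\begin{equation*}
\rho\big(f_1,\,|\det(\Cm)|\,f_2\circ g\big)=\rho\big(f_1,T(f_2)\big)\le\rho(f_1,f^*)+\rho\big(T(f^*_g),T(f_2)\big)=\rho(f_1,f^*)+\rho(f_2,f^*_g),
\end{equation*}
so it suffices to make each summand on the right smaller than $\varepsilon/2$ with posterior probability tending to one.

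For the two consistency claims, A1--A3 on $f^*$ together with $\nu_0>(d+1)(2d-3)$ are precisely the hypotheses under which the DPM-G prior $\Pi$ of \eqref{eq:DPmmls}--\eqref{eq:base} is strongly consistent at $f^*$: this combines the Kullback--Leibler support and sieve arguments underlying the consistency theorems of \citet{Wu10} and \citet{Can17} with Lemma~\ref{lem:improve}, which discards the redundant finite-entropy condition; hence $\Pi(\rho(\cdot,f^*)<\varepsilon/2\mid\Xv^{(n)})\to1$ in $F_n^*$-probability. It then remains to check that $f^*_g$ inherits A1--A3. A1 is immediate because $g^{-1}$ is a bijection of $\R^d$, giving $0<f^*_g<|\det(\Cm)|^{-1}M$. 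For A2, substituting $\yv=g(\xv)$ turns $\int\|\yv\|^{2(1+\eta)}f^*_g(\yv)\,\ddr\yv$ into $\int\|\Cm\xv+\bv\|^{2(1+\eta)}f^*(\xv)\,\ddr\xv$, and $\|\Cm\xv+\bv\|^{2(1+\eta)}\le C\,(1+\|\xv\|^{2(1+\eta)})$ for a constant $C=C(\Cm,\bv,\eta)$, which is finite by A2 for $f^*$ and $\int f^*=1$ (the same $\eta$ works). For A3, affinity of $g^{-1}$ gives $\|g^{-1}(\tv)-g^{-1}(\yv)\|\le\|\Cm^{-1}\|_{\mathrm{op}}\|\tv-\yv\|$, so the ball of radius $\delta$ about $\yv$ is mapped by $g^{-1}$ into the ball of radius $\delta':=\|\Cm^{-1}\|_{\mathrm{op}}\,\delta$ about $g^{-1}(\yv)$; hence, writing $\varphi^g_\delta$ for the local infimum of $f^*_g$, one gets $|\det(\Cm)|^{-1}\varphi_{\delta'}(g^{-1}(\yv))\le\varphi^g_\delta(\yv)\le f^*_g(\yv)\le|\det(\Cm)|^{-1}M$, and combining these two-sided bounds (using $(a+b)^2\le2a^2+2b^2$) shows that $f^*_g(\yv)\log^2\varphi^g_\delta(\yv)$ is at most a constant multiple of $f^*(g^{-1}(\yv))\,(1+\log^2\varphi_{\delta'}(g^{-1}(\yv)))$, which is bounded on $\R^d$ by A1 and A3 for $f^*$ (taking $\delta$ small enough that A3 for $f^*$ applies at radius $\delta'$). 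Thus $\Pi(\rho(\cdot,f^*_g)<\varepsilon/2\mid g(\Xv)^{(n)})\to1$ in $(f^*_g)_n$-probability, equivalently in $F_n^*$-probability.

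Finally, under the joint posterior $\Pi_2(\cdot\mid\Xv^{(n)})$ — for which $f_1$ and $f_2$ are the conditionally independent draws from $\Pi(\cdot\mid\Xv^{(n)})$ and $\Pi(\cdot\mid g(\Xv)^{(n)})$ — the triangle inequality above yields the inclusion
\begin{equation*}
\{f_1:\rho(f_1,f^*)<\varepsilon/2\}\times\{f_2:\rho(f_2,f^*_g)<\varepsilon/2\}\ \subseteq\ \{(f_1,f_2):\rho(f_1,|\det(\Cm)|\,f_2\circ g)<\varepsilon\},
\end{equation*}
so the $\Pi_2(\cdot\mid\Xv^{(n)})$-mass of the right-hand set is at least $\Pi(\rho(\cdot,f^*)<\varepsilon/2\mid\Xv^{(n)})\,\Pi(\rho(\cdot,f^*_g)<\varepsilon/2\mid g(\Xv)^{(n)})$, a product of two factors in $[0,1]$ each tending to one in $F_n^*$-probability, hence tending to one. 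The step I expect to demand the most care is the verification that A3 transfers to $f^*_g$: one must control the perturbed local infimum $\varphi^g_\delta$ through the affine rescaling of radii and argue that boundedness survives uniformly over $\R^d$; everything about consistency of the base model is imported from \citet{Wu10,Can17} and Lemma~\ref{lem:improve}.
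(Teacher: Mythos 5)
Your proposal is correct and follows essentially the same route as the paper: posterior consistency of the same DPM-G prior at $f^*$ and at $f^*_g=|\det(\Cm)|^{-1}f^*\circ g^{-1}$ (imported from \citet{Wu10} and \citet{Can17} via Lemma~\ref{lem:improve}, together with the transfer of A1--A3 to $f^*_g$, which is the paper's Lemma~\ref{lem:f0c} and which you verify in more detail than the paper does), then the invariance of $\rho$ under the affine change of variables plus the triangle inequality. The only divergence is the final combination step: you lower-bound the joint posterior mass by the product of the two marginal posterior probabilities, which relies on the conditional independence implicitly built into $\Pi_2$, whereas the paper uses the coupling-free bound $\Pi_2(A\times\mathscr{F}\mid\Xv^{(n)})+\Pi_2(\mathscr{F}\times B\mid\Xv^{(n)})-1$, valid for any joint with the given marginals; both arguments deliver the stated limit.
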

	
	It is worth stressing that, while in line with the usual posterior consistency approach the existence of a true data generating process $F^*$ is postulated, the focus of Theorem~\ref{thm:asym} is not on the asymptotic behaviour of the posterior distribution with respect to the true data generating process but rather on the relative behaviour of two posterior distributions, obtained by conditioning the same model on two sets of observations which coincide up to an affine transformation. More specifically, according to Theorem~\ref{thm:asym}, when the sample size grows, the joint distribution $\Pi_2(\cdot\mid \Xv^{(n)})$ concentrates its mass on a subset of the space $\mathscr{F}\times\mathscr{F}$ where the distance $\rho$ between $f_1$ and $|\det(\Cm)| f_2 \circ g$ is smaller than $\varepsilon$. In other terms, the two posterior distributions get similar, up to the affine transformation, as $n$ becomes large.
	
	The assumptions of Theorem~\ref{thm:asym}  refer to the true generating distribution $f^*$ of $\Xv^{(n)}$. 
	Assumption A1 requires $f^*$ to be bounded and fully supported on $\R^d$.
	Assumption A2 requires the tails of $f^*$ to be thin enough for some moment of order strictly larger than two to exist. Such an assumption is not met, for example, by a Student's $t$-distribution with two degrees of freedom, case which will be considered in the simulation study of Section~\ref{sec:simulation}. Finally, assumption A3 is a weak condition ensuring local regularity of the entropy of $f^*$.
	
	The proof of Theorem~\ref{thm:asym} is based on previous results proved by \cite{Wu08} and \cite{Can17} in order to derive the so-called Kullback--Leibler property at $f^*$  for some mixtures of Gaussians models. Importantly, in Lemma~\ref{lem:improve} (see Appendix~\ref{thm:asym}), we improve upon their results by showing that the set of assumptions required by \cite{Wu08} and \cite{Can17} can be reduced to the simpler set of assumptions A1, A2 and A3 of Theorem~\ref{thm:asym} by removing a redundant assumption. More specifically, we prove 
	that A1, A2 and A3  imply that $f^*$ has finite entropy and  regular local entropy, conditions required in the aforementioned works.

	\section{Simulation study}\label{sec:simulation}

	We ran an extensive simulation study with a two-fold goal: 1) providing empirical support to our result on the large $n$ asymptotic robustness of a DPM-G model, 
	under affine transformations of the data; 2) investigating whether an analogous robustness property holds when DPM-G models are adopted to make inference on the clustering structure of the data. To this end, we considered two distinct data-generating distributions, which allowed us to highlight different facets of DPM-G models. In the first case, data are generated from a mixture of bivariate Gaussians, distribution which satisfies the conditions of Theorem~\ref{thm:asym}. 
	This study complements our asymptotic result with a numerical investigation of the finite $n$ behaviour of DPM-G models, when data undergo an affine transformation. Moreover, the same data are used to perform a numerical study on the effect of data transformation and sample size on the number of clusters on the estimated partition. While not directly related to theoretical results of Section~\ref{sec:theory}, this part of the study is relevant in view of the astronomical application of Section~\ref{sec:astro} where a DPM-G model will be used for unsupervised clustering. The second scenario we considered does not satisfy the set of assumptions of Theorem~\ref{thm:asym}, as data are generated from univariate Student's $t$-distribution with two degrees of freedom, thus breaking assumption A2. Our study, in this case, aims at assessing the robustness of DPM-G models when the sufficient conditions of Theorem~\ref{thm:asym} are not met.

	\subsection{Data from mixture of Gaussians}\label{subsec:mix}
	
	The first part of the simulation study focuses on the analysis of data generated from a mixture of Gaussians.
	Specifically, we considered three sample sizes, namely $n = 100$, $n = 300$ and $n = 1\,000$, and we generated 100 samples $\Xm^{(n)}$, for each $n$, from a mixture of two Gaussian components with  density function
		\begin{equation*}
	f(\xv) = \frac{1}{2} \phi_2\left(\xv; \mv_1, \Sm_1\right) + \frac{1}{2} \phi_2\left(\xv;\mv_2,\Sm_2\right),
	\end{equation*}
	 where $\phi_2(\cdot;\mv,\Sm)$ denotes the density function of a two-dimensional Gaussian distribution with mean vector $\mv$ and covariance matrix $\Sm$, and the two components of the mixture are characterized by the parameters 
	 \begin{equation*}
	     \mv_1=(-2,-2),\quad\Sm_1=\begin{bmatrix} 1 & 0.85 \\ 0.85 & 1 \end{bmatrix},\quad\mv_2=(2,2),\quad \Sm_2=\begin{bmatrix} 1 & 0 \\ 0 & 1 \end{bmatrix}.\end{equation*}
	     
\noindent In order to test the robustness of the model under affine transformations of the data, we compressed or stretched the generated datasets by using five different constants, namely $c = 1/5$, $c = 1/2$, $c = 1$, $c = 2$ and $c = 5$. For each constant, we multiplied the simulated data by $c$, thus
	obtaining a transformed dataset $\Xm_c^{(n)} := c \Xm^{(n)}$. We then fitted a DPM-G model, specified as in \eqref{eq:DPmmls} and \eqref{eq:base}, to each one of the $5\times 3\times 100 = 1\,500$ resulting datasets. In order to enhance the flexibility of the model, we completed its specification by setting a normal/inverse-Wishart prior distribution for the hyperparameters $(\mv_0,\Bm_0)$ of the base measure \eqref{eq:base}. 
	Namely, we set 
	$\Bm_0 \sim IW(4, \diag(\mathbf{15}))$ and $\mv_0\mid\Bm_0 \sim N(0, \Bm_0)$, specification chosen so that $\E[\muv] = \mathbf{0}$ and to guarantee a prior guess on the location component $\muv$ flat enough to cover the support of the non-transformed data. As for the scale component of the base measure \eqref{eq:base}, we set $(\nu_0,\Sm_0)=(4, \diag(\mathbf 1))$. Finally, the precision parameter $\alpha$ of the Dirichlet process was set equal to 1.
	
	Realisations of the mean of the posterior distribution were obtained by means of a Gibbs sampler relying on a Blackwell--McQueen P\'olya urn scheme \citep[see][]{Mue96}, implemented in the \href{https://github.com/rcorradin/AFFINEpack}{\textsf{AFFINEpack}} \textsf{R} package\footnote{\label{foot:package}The package is available at \href{https://github.com/rcorradin/AFFINEpack}{https://github.com/rcorradin/AFFINEpack} and can be installed via \textsf{devtools}. For reproducibility,  the code is available at  \href{https://github.com/rcorradin/Affine}{https://github.com/rcorradin/Affine}.}.
	For each replicate, posterior inference was drawn based on $5\,000$ iterations, obtained after discarding the first $2\,500$. Convergence of the chains was assessed by visually investigating the traceplots of some randomly selected replicates, which did not provide indication against it.
	
	\begin{figure}[h]
		\centering
		\includegraphics[width=0.99\textwidth]{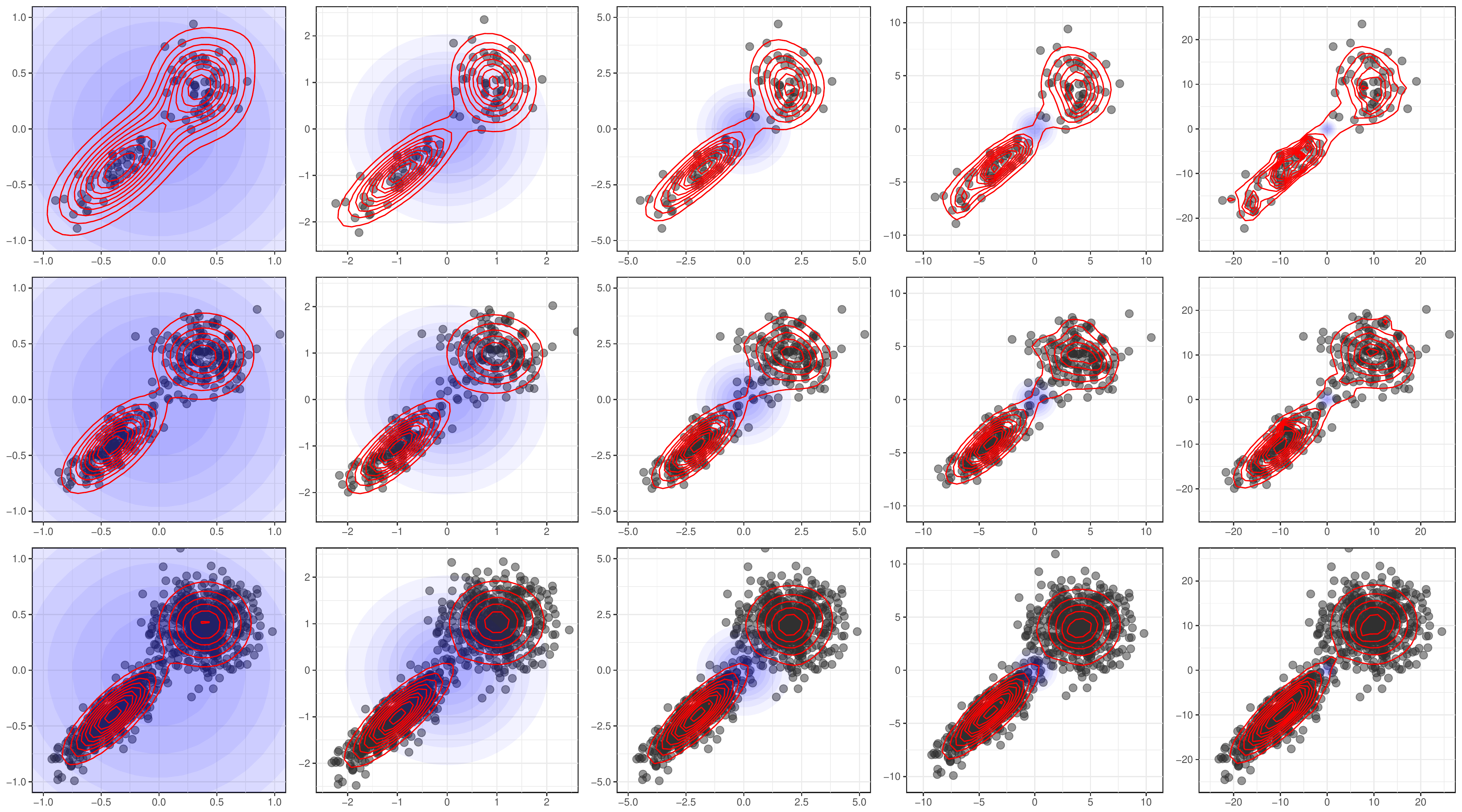} 
		\caption{Simulation study, data generated from a mixture of Gaussians. Based on a single replicate of the samples $\Xv^{(100)}$, $\Xv^{(300)}$ and $\Xv^{(1\,000)}$, scatter plots of the data (grey dots), contour plots of the estimated densities based on a DPM-G model (red curves) and contour plots for the expected prior density (blue filled curves). Left to right: rescaling constant $c=1/5$, $c=1/2$, $c=1$, $c=2$, $c=5$. Top to bottom: sample size $n = 100$, $n = 300$, $n = 1\,000$.}
		\label{fig:splotsim}
	\end{figure}
	
	Figure~\ref{fig:splotsim} shows, for every $n\in\{100,300,1\,000\}$ and $c\in\{1/5,1/2,1,2,5\}$, a contour plot of the estimated posterior densities. 
	The difference between estimated densities, across different values of $c$, is apparent when $n=100$, with the two extreme cases, namely $c=1/5$ and $c=5$, displaying very different contour lines and possibly suggesting a different number of modes in the estimated density. For larger sample sizes, this difference is less evident and, when $n=1\,000$, the contour plots are hardly distinguishable. These qualitative observations are in agreement with the large $n$ asymptotic results of Theorem~\ref{thm:asym}. The plots of Figure~\ref{fig:splotsim} refer to a single realisation of the samples $\Xv^{(100)}$, $\Xv^{(300)}$ and $\Xv^{(1\,000)}$ considered in the simulation study, although qualitatively similar results can be found in almost any replicate.
	
	The findings drawn from a visual inspection of Figure~\ref{fig:splotsim} were confirmed by assessing the distance between estimated posterior densities. Specifically, for any considered sample size $n$ and for any pair of values $c_1$ and $c_2$ taken by the constant $c$, we approximately evaluated the $L^1$ distance between the suitably rescaled estimated posterior densities obtained conditionally on $\Xv_{c_1}^{(n)}$ and on $\Xv_{c_2}^{(n)}$. The results of such analysis are shown in Figure~\ref{fig:L1plot} and indicate that, as the sample size grows, the difference in terms of $L^1$ distance strictly decreases. 
	
	\begin{figure}[h]
		\centering
		\includegraphics[width=0.99\textwidth]{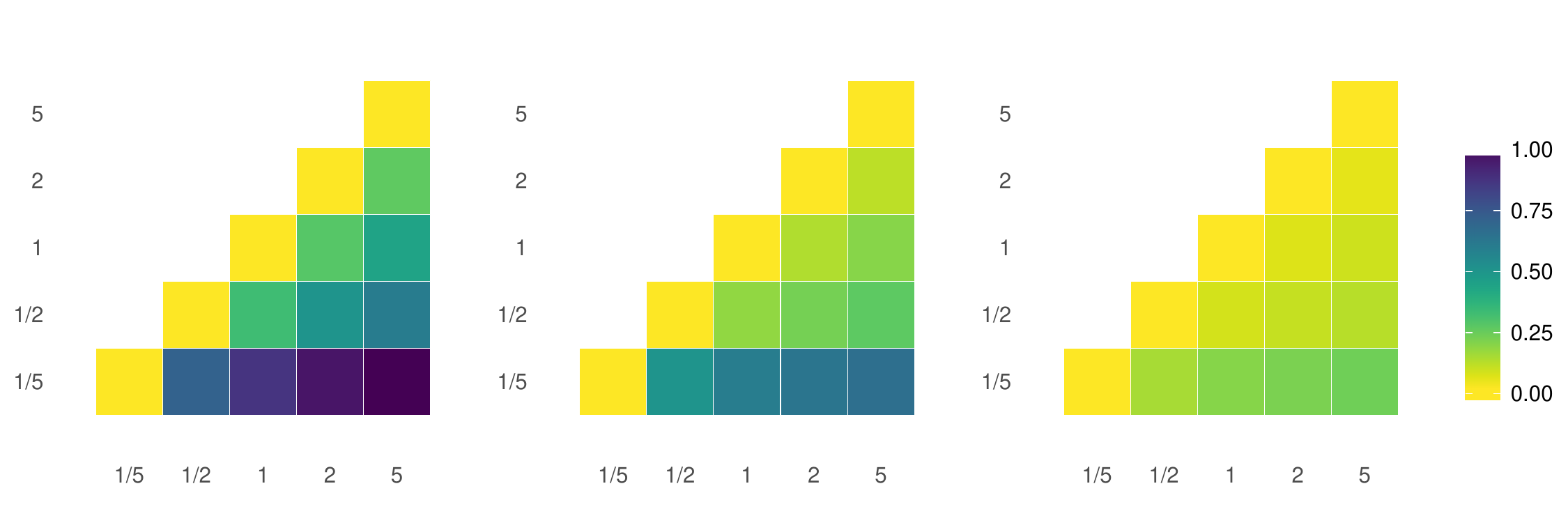} 
		\caption{Simulation study, data generated from a mixture of Gaussians. Normalised $L^1$ distances (all distances are divided by the largest observed distance) between suitably rescaled estimated densities, conditionally on data rescaled by means of different constants $c_1$ ($X$ axis) and $c_2$ ($Y$ axis),  where $c_1$ and $c_2$ denote the scaling factors used to transform the data,
			averaged over $100$ replications. Left to right: sample size $n=100$, sample size $n=300$, sample size $n=1\,000$.}
		\label{fig:L1plot}
	\end{figure}
	
	The posterior distribution of the random density induced by a DPM-G model provides interesting insight also on the clustering structure of the data. The second goal of the simulation study, thus, consisted in investigating the impact of the scaling factor $c$ on the estimated number of groups in the partition induced on the data.
	To this end, for each considered $n$ and $c$, we estimated
	$\hat K_n^{(\text{VI})}$, the number of groups in the optimal partition estimated using a procedure introduced by \cite{Wad17} and based on the variation of information loss function. In light of known inconsistency results for the posterior distribution of the number of components under a DPM-G model \citep[see, for instance,][]{miller2013simple}, the numerical findings of this part of the simulation study contribute to shed some light on the large $n$ behaviour of $\hat{K}_n^{\text{(VI)}}$.
	The average values for this quantity, over 100 replicates, are reported in Table~\ref{tab:simresult}. 
	\begin{table}[h]
		\caption{Simulation study, data generated from a mixture of Gaussians. Averages over 100 replicates for $\hat K_n^{(\text{VI})}$, the number of clusters of the optimal partition estimated by means of \citet{Wad17}'s variation of information method. Left to right: rescaling constant $c=1/5$, $c=1/2$, $c=1$, $c=2$, $c=5$. Top to bottom: sample size $n = 100$, $n = 300$, $n = 1\,000$.}
		\label{tab:simresult}
		\centering
		\begin{tabular}{clllll}
			\hline
			\multicolumn{1}{l}{} & $c = 1/5$ & $c = 1/2$ & $c = 1$ & $c = 2$ & $c = 5$ \\ \hline
			\multicolumn{1}{l}{$n=100$}   &1.81 &2.04 &2.84 &5.96 &10.52    \\ 
			\multicolumn{1}{l}{$n=300$}   &2.00 &2.03 &2.20 &2.82  &5.18    \\ 
			\multicolumn{1}{l}{$n=1\,000$}  &2.00 &2.00 &2.04 &2.05  &2.12    \\ \hline
		\end{tabular}
	\end{table}
	There appears to be a clear trend suggesting that a larger scaling constant $c$ leads to a larger $\hat K_n^{(\text{VI})}$: this finding is consistent with the fact that, if the data are stretched while the prior specification is kept unchanged, then we expect the estimated posterior density to need a larger number of Gaussian components to cover the support of the sample. 
	For the purpose of this simulation study the main quantity of interest is the ratio between the estimated number of groups under any two distinct values $c_1$ and $c_2$ for the scaling constant $c$, that is
	$\hat K_{n,c_1}^{(\text{VI})}/\hat K_{n,c_2}^{(\text{VI})}$. The results presented in Table~\ref{tab:simresult} clearly indicate that, as the sample size $n$ becomes large, such ratios tend to approach $1$. This suggests that the large $n$ robustness property of the DPM-G model nicely translates to an equivalent notion of robustness in terms of the estimated number of groups $\hat K_{n}^{(\text{VI})}$ in the data.
	
	\subsection{Data from Student's $t$-distribution}\label{sec:misst}
	
	The second part of the simulation study deals with the same simulation scenarios ($n\in\{100,300,1\,000\}$ and $c=\{1/5,1/2,1,2,5\}$) considered in Section~\ref{subsec:mix}, with the difference that data are generated from a Student's $t$-distribution with two degrees of freedom. It is important to stress that such a distribution does not have finite variance and therefore does not meet assumption A2 of Theorem~\ref{thm:asym}. Also in this case we considered 100 replicates for each considered simulation scenario. 
	
	We analysed each dataset with a univariate version of the DPM-G model specified in \eqref{eq:DPmmls} and \eqref{eq:base}. That is, we considered a univariate Gaussian kernel and a base measure defined as the product of two independent distributions, a univariate normal distribution for the location parameter $\mu\sim N(m_0, s_0^2)$ and an inverse-gamma distribution for the scale parameter $\sigma^2\sim IG(a_0,b_0)$. The model specification is completed by setting 
	$a_0 = 2$ and $b_0 = 1$, so that $\E[\sigma^2] = 1$, and by considering a normal/inverse-gamma distribution for the hyperparameters $(m_0, s_0^2)$, specifically $s_0^2 \sim IG(2,1)$ and $m_0 \mid s_0^2 \sim N(0, s_0^2)$. Finally, the precision parameter $\alpha$ of the Dirichlet process was set equal to 1. Realisations of the mean of the posterior distribution were obtained by means of a Gibbs sampler relying on a Blackwell--McQueen P\'olya urn scheme\footnote{See footnote~\ref{foot:package}.}. Posterior inference was drawn based on $5\,000$ iterations, after a burn-in period of $2\,500$ iterations. 
	We assessed the convergence of the chains by visually investigating traceplots, which did not provide indication against it. 
	
	Also for these data, for any considered sample size $n$ and for any pair of values $c_1$ and $c_2$ taken by the constant $c$, we approximately evaluated the $L^1$ distance between the suitably rescaled estimated posterior densities obtained conditionally on $\Xv_{c_1}^{(n)}$ and on $\Xv_{c_2}^{(n)}$. The results of such analysis are displayed in Figure~\ref{fig:L1plot_tstudent} and indicate that, as the sample size grows, the $L^1$ distance decreases. This qualitative findings suggest that asymptotic robustness might hold also for data generated from a distribution not meeting the assumptions of Theorem~\ref{thm:asym}.

	\begin{figure}[h]
		\centering
		\includegraphics[width=0.99\textwidth]{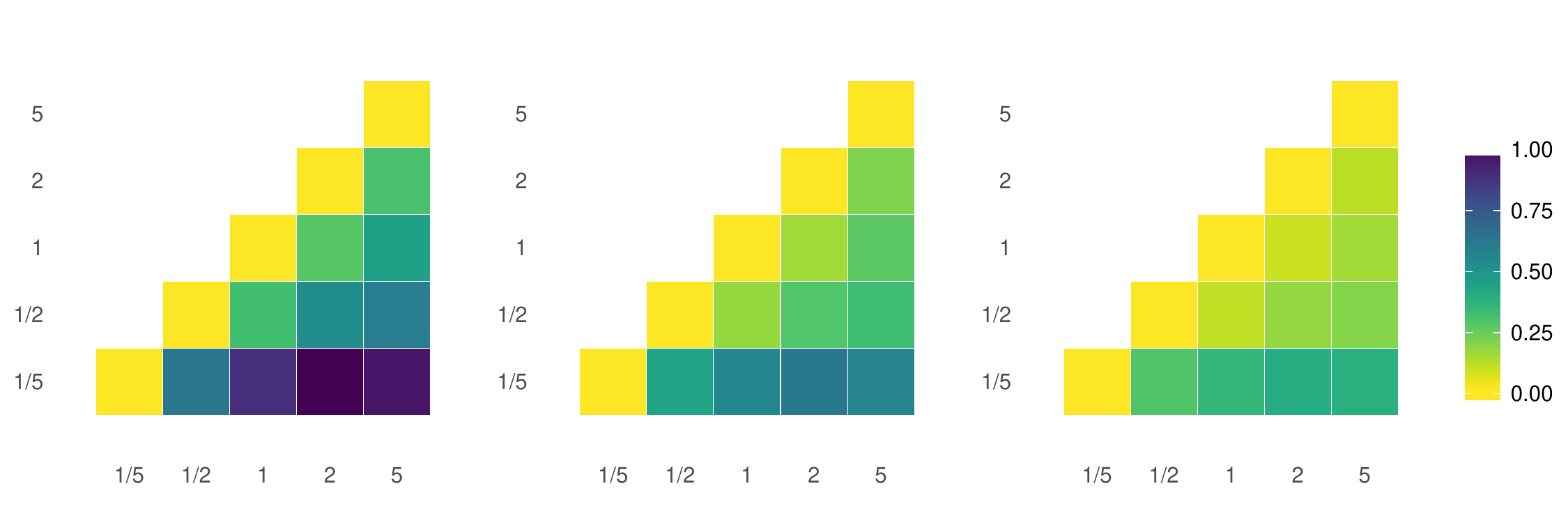} 
		\caption{Simulation study, data generated from a Student's $t$-distribution. Normalised $L^1$ distances (all distances are divided by the largest observed distance) between suitably rescaled estimated densities, conditionally on data rescaled by means of different constants $c_1$ ($X$ axis) and $c_2$ ($Y$ axis), where $c_1$ and $c_2$ denote the scaling factors used to transform the data,	 averaged over $100$ replications. Left to right: sample size $n=100$, sample size $n=300$, sample size $n=1\,000$.}
		\label{fig:L1plot_tstudent}
	\end{figure}

	\section{Astronomical data}\label{sec:astro}
	
	The large $n$ asymptotic robustness to affine transformation of the DPM-G model makes it a suitable candidate also for analysing data whose components are not commensurable and for which an informed choice of the relative scale of different dimensions seems prohibitive. We fitted the DPM-G model, specified as in \eqref{eq:DPmmls} and with base measure \eqref{eq:base}, to the NGC~2419 dataset described in Section~\ref{sec:intro}.
	The ultimate goal of our analysis consists in classifying stars as belonging to the NGC~2419 globular cluster or as being contaminants: an accurate classification is crucial for the astronomers to study the dynamics of the globular cluster. Since the large majority of the stars in the dataset is expected to belong to the globular cluster, with only a few of them being contaminants, we will identify the globular cluster as the largest group in the estimated partition of the dataset.
	
	Prior to any analysis, data were standardised component by component, the legitimacy of such procedure following from the robustness results of Theorem~\ref{thm:asym}.
	Hyperprior distributions were specified for the location parameter of the base measure \eqref{eq:base} and on the DP precision parameter $\alpha$. Specifically, 
	$\Bm_0 \sim IW(6, \diag(\mathbf{15}))$ and $\mv_0\mid\Bm_0 \sim N(0, \Bm_0)$, specification chosen to guarantee a prior guess on the location component $\muv$ flat enough to cover the support of the data and centered at $\mathbf 0$. In addition, $\alpha$ was given a gamma prior distribution with unit shape parameter and rate parameter equal to $5.26$, so that, a priori, $\alpha_0:=\E[\alpha] \simeq 0.19$. This leads to an expected number of components $K_n$ in a sample of size $n=139$ from a DP equal to $\sum_{i=1}^{n}\alpha_0/(\alpha_0 + i - 1) \simeq 2$, thus reflecting the prior opinion of astronomers who would expect two distinct groups of stars in the dataset. Finally, as far as the scale component of the base measure \eqref{eq:base} is concerned, we set $(\nu_0,\Sm_0)=(26,\diag(\mathbf{21}))$, where the number of degrees of freedom $\nu_0=26$ of the inverse-Wishart distribution was chosen so that to satisfy the conditions of Theorem~\ref{thm:asym} and, in turn, the scale matrix $\Sm_0=\diag(\mathbf{21})$ so that $\E[\Sigmam] = \diag (\mathbf 1)$. 
	Realisations of the mean of the posterior distribution were obtained by means of a Gibbs sampler relying on a Blackwell--McQueen P\'olya urn scheme\footnote{See footnote~\ref{foot:package}.}. In turn, posterior inference was drawn based on $20\,000$ iterations, after a burn-in period of $5\,000$ iterations. Convergence of the chains was assessed by visually investigating traceplots, which did not provide indication against it. 

    \begin{figure}[h]
        \centering
        \includegraphics[width=0.99\textwidth]{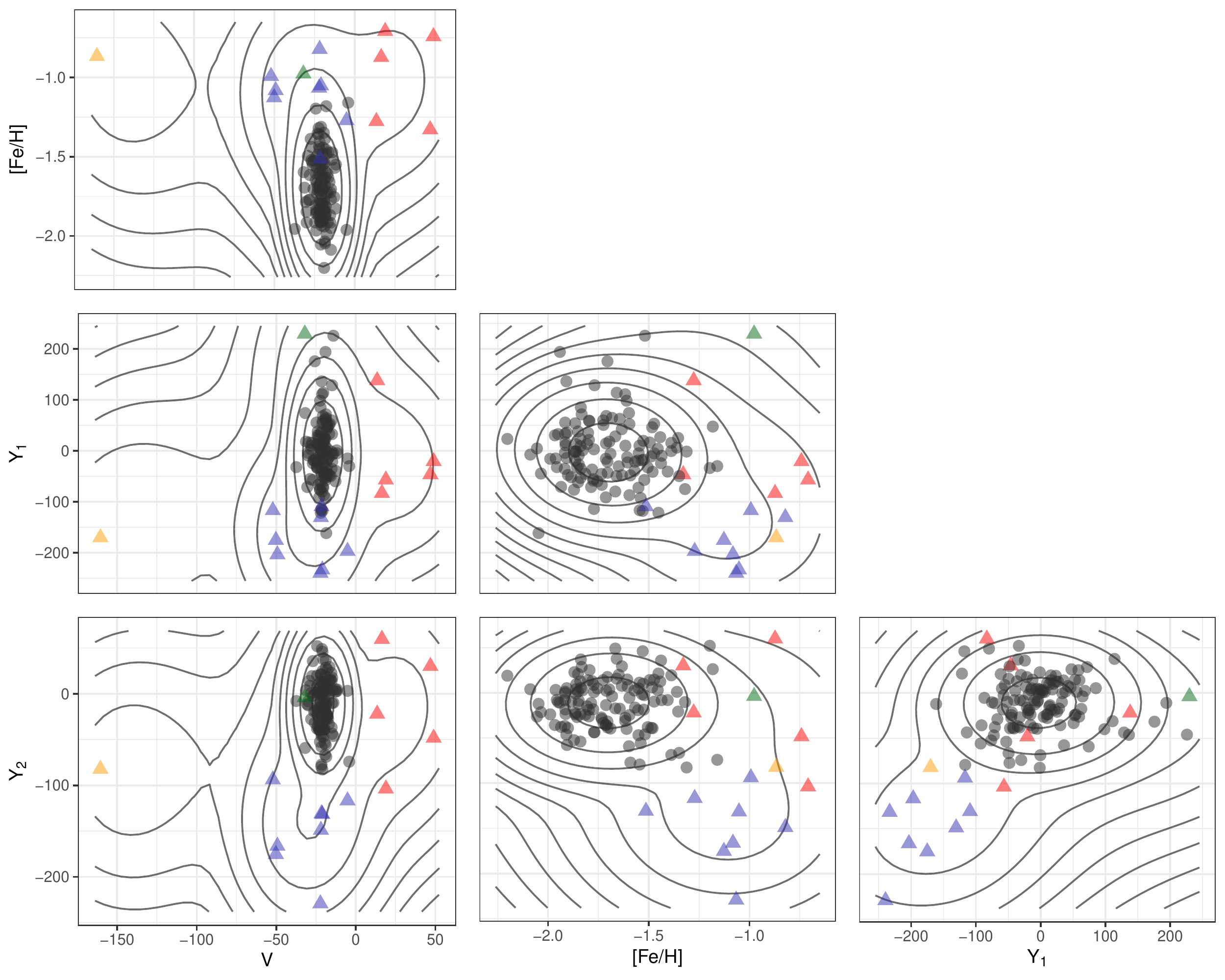} 
        \caption{NGC~2419 data. Contour plots of the bivariate log marginal densities estimated via DPM-G model (log densities are used for better visualization). Partition estimated via DPM-G model combined with \citet{Wad17}'s variation of information method. Five groups are detected: the largest group (grey dots), group A (blue triangles), group B (red triangles), group C (one orange triangle), group D (one green triangle).}
        \label{fig:dens}
    \end{figure}
    		
Figure~\ref{fig:dens} displays contour plots for the six two-dimensional projections of the estimated posterior density, with the scatter plots of the dataset with individual observations coloured according to their membership in the optimal partition estimated via the variation of information method of \citet{Wad17}, and labeled as main group (grey circles) and other groups (coloured triangles). 
	The estimated partition is composed of five groups. The largest one, identified as the globular cluster, consists of 124 stars. The remaining 15 stars are thus considered contaminants and are further divided into four groups, one composed by eight stars (group A), 
	one containing five stars (group B) 
	and two singletons (groups C and D). 
	A visual investigation of Figure~\ref{fig:dens} suggests that stars in group A differ from those in the globular cluster in terms of metallicity and position, with the contaminants characterised by larger values for $[{\rm Fe/H}]$ and smaller values for $Y_1$ and $Y_2$. Stars in group B differ from the globular cluster in terms of velocity and metallicity, with the contaminants showing larger values for $V$ and $[{\rm Fe/H}]$. Finally, groups C and D are singletons, the first one being characterised by a high metallicity and an extremely small value for the velocity, the second one showing large values for both metallicity and location $Y_1$. Our unsupervised statistical clustering can be compared to the clustering of \citet{Iba11} (described in their Figure 4) obtained by means of ad hoc physical considerations. Specifically, once the best fitting physical model, in the class of either Newtonian or Modified Newtonian Dynamics models, is detected, they use it in order to compute the average values of the physical variables describing the stars. Stars are then assigned to the globular cluster based on a comparison between their velocity and the average model velocity: those lying close enough are deemed to belong to the cluster, while the others are considered as potential contaminants. For the latter, the evidence of being contaminants is measured by evaluating how distant their metallicity is from the average model one. Two classifications are then proposed: the first one assigns to the globular cluster only the 118 stars for which the evidence seems strong, the second and less conservative strategy classifies as belonging to the globular cluster a total of 130 stars. Following this distinction and for the sake of simplicity, we summarise the results of \citet{Iba11}'s analysis, by devising three groups of stars:
	\begin{itemize}
		\item[-] \emph{globular cluster}: 118 stars deemed to belong to the globular cluster,
		\item[-] \textit{likely globular cluster}: 12 stars assigned to the globular cluster only when the less conservative 
		procedure is adopted,
		\item[-] \textit{contaminants}: 9 stars with strong evidence of being contaminants.
	\end{itemize} 
	\renewcommand{\arraystretch}{1.5}
	\begin{table}[h]
		\caption{NGC~2419 data. Comparison between the groups identified by \citet{Iba11} and the groups estimated via DPM-G model.}
		\label{tab:VSibata}
		\centering
		\begin{tabular}{llc*{5}{c}}
			\toprule
			&       && \multicolumn{5}{c}{DPM-G groups} \\ 
			&       && \textit{largest}    & \textit{A}   & \textit{B}  & \textit{C}    & \textit{D} \\ \midrule
			&&\multicolumn{1}{c|}{\textit{total}}&\textit{124}&\textit{8}&\textit{5}&\textit{1}&\textit{1}\\ \cline{3-8}
			\multirow{3}{*}{\rotatebox[origin=c]{90}{Ibata et al.}\hspace{.3cm}\rotatebox[origin=c]{90}{groups}} & \textit{globular cluster}   &\multicolumn{1}{c|}{\textit{118}}& 114      & 4     & 0   &0  &0   \\
			& \textit{likely globular cluster} &\multicolumn{1}{c|}{\textit{12}}& 10        &  1 &0 &0     & 1     \\
			& \textit{contaminants} & \multicolumn{1}{c|}{\textit{9}}& 0        &  3 &5  & 1 &0     \\ \bottomrule
		\end{tabular}
	\end{table}
	\def\Ibata{Ibata et al.\xspace} 
	For the purpose of comparison, we report in Table~\ref{tab:VSibata} the confusion matrix of the groups obtained via the DPM-G model against the groups detected by \Ibata 
	All of the 124 stars belonging to the largest group of the partition estimated based on the DPM-G model belong to the groups identified as \textit{globular cluster} or \textit{likely globular cluster} by \Ibata 
	At the same time, out of the nine stars classified as contaminants by \Ibata, the approach based on the DPM-G model assigns none to the globular cluster, three to group A, five stars to group B, which is composed only by stars considered contaminants in \Ibata, and the star of group C,  which shows an extremely small value for the velocity variable. Finally, group D contains only one star, which is not considered a contaminant by \Ibata
	
	\begin{figure}[h]
	\centering
	\begin{subfigure}{.49\textwidth}
	\centering
    	\includegraphics[width=1\textwidth]{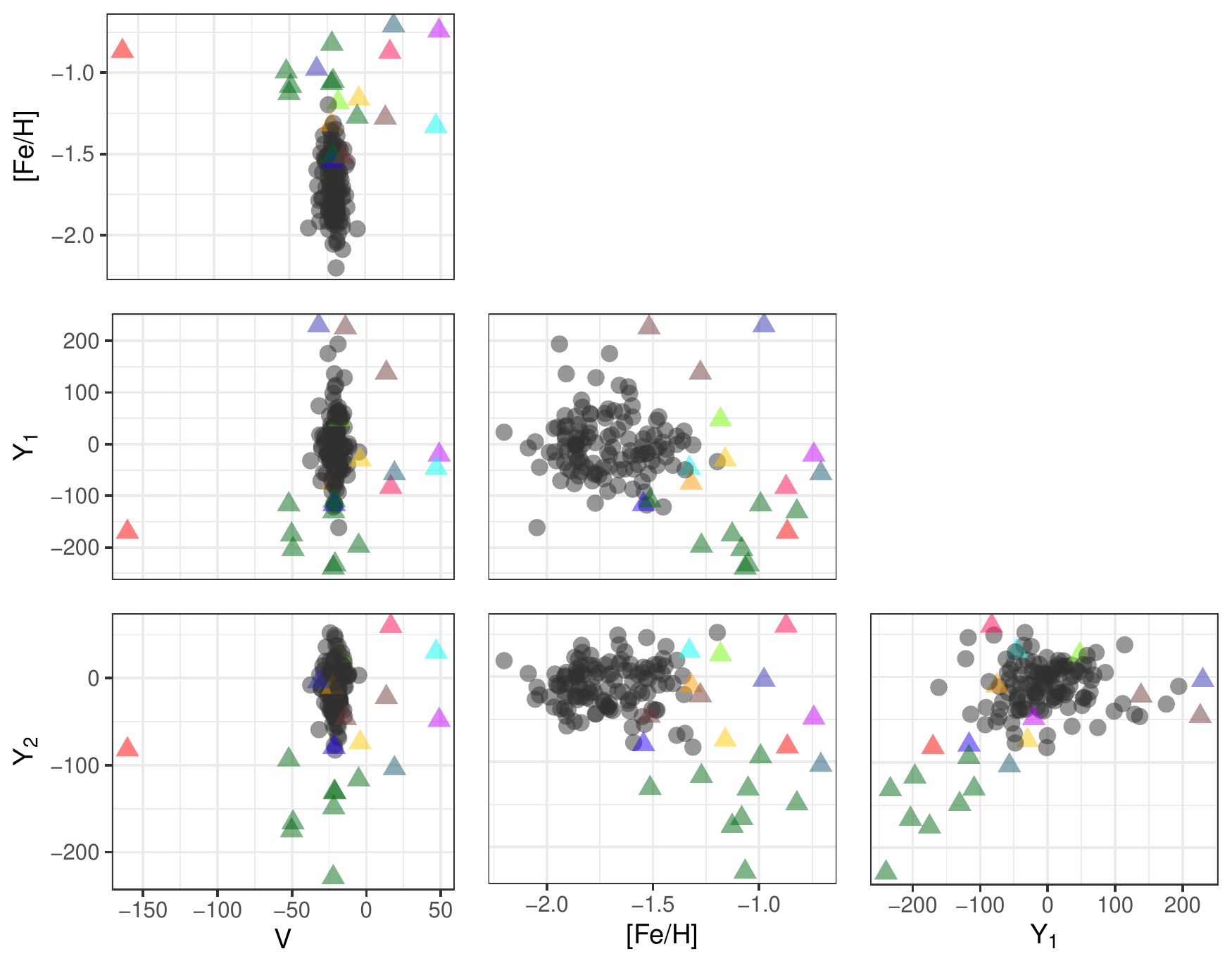} 
    \end{subfigure}
    \begin{subfigure}{.49\textwidth}
	\centering
    	\includegraphics[width=1\textwidth]{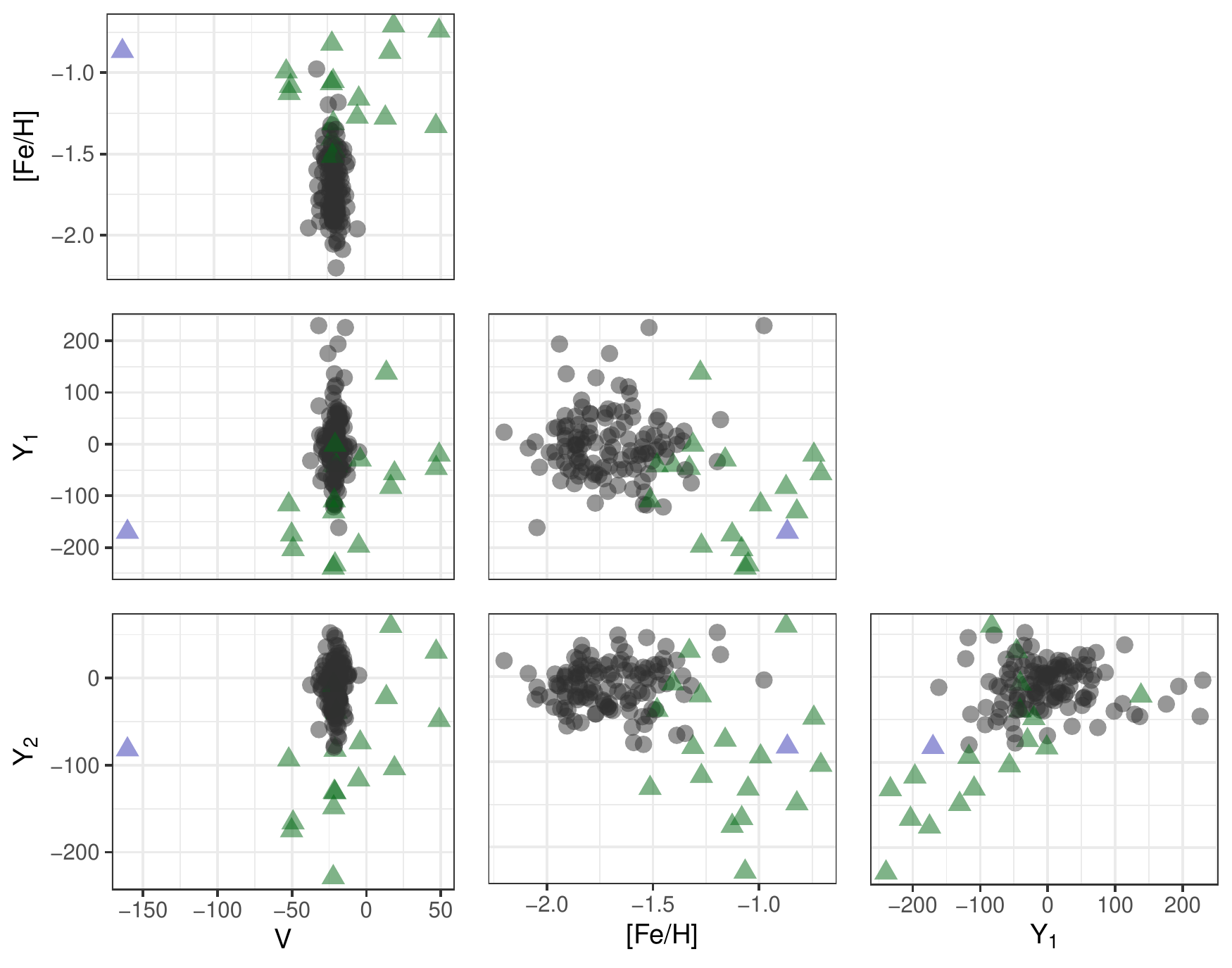} 
    \end{subfigure}\caption{\label{fig:uncert}NGC~2419 data. Lower bound (left) and upper bound (right) of the credible ball on the partitions' space, estimated via DPM-G model combined with \citet{Wad17}'s variation of information method.}%
    \end{figure}
	
	
In order to characterize the uncertainty associated to the estimated optimal partition displayed in Figure~\ref{fig:dens}, we considered a 95\% posterior credible ball in the space of partitions, based on the variation of information metric 
	\citep[see][for details]{Wad17}. Figure~\ref{fig:uncert} shows the vertical lower bound and the vertical upper bound of such credible ball: the first one is the partition in the credible ball which, among those with the largest number of clusters, is the most distant from the optimal partition; the latter one is the partition in the credible ball which, among those with the smallest number of clusters, is the most distant from the optimal one.
	The lower bound displays a total of $13$ groups, with the largest one, object of interest in our analysis, composed by $119$ observations. The upper bound instead is composed of only $3$ groups, with the largest one counting $121$ observations. The largest groups in the two vertical bounds share $116$ observations, thus showing a limited variability, as far as the size of the largest cluster, main object of our analysis, is concerned. This nicely suggests that the adopted procedure for differentiating stars belonging to the globular cluster and contaminants can be considered robust. Finally, further insight on the clustering structure of the data is provided by Figure~\ref{fig:heat}, which shows the heatmap representation of the posterior similarity matrix obtained from the MCMC output. In agreement with the partition obtained by applying the approach of \citet{Wad17},  one main group,  identified with the globular cluster, can be clearly detected in Figure~\ref{fig:heat}. As for the remaining stars, arguably the contaminants, there seem to be two well defined groups, A and B, and a few stars whose group membership is less certain.
	
	\begin{figure}[h]
		\centering
		\includegraphics[width=0.8\textwidth]{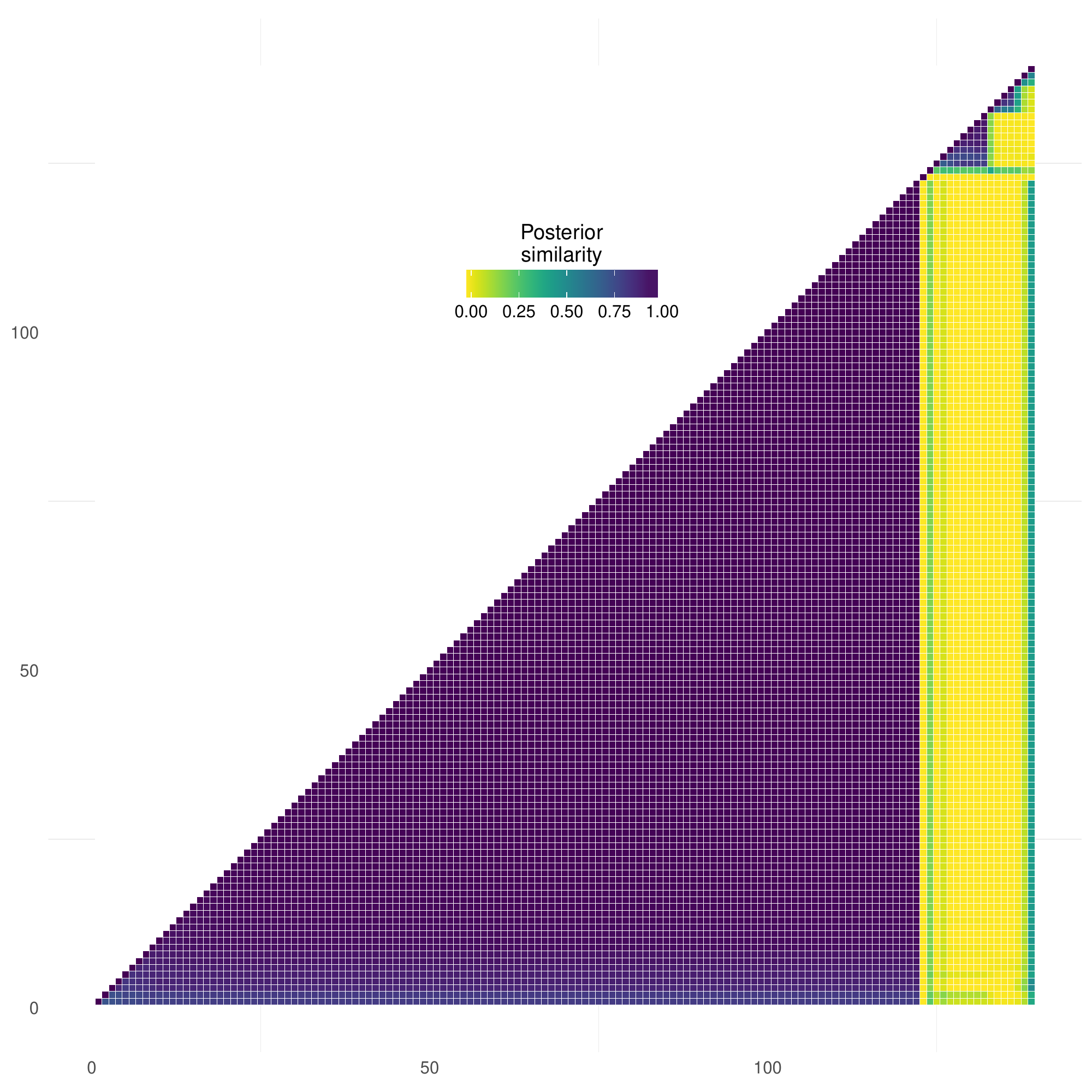} 
		\caption{NGC~2419 data. Heatmap representation of the posterior similarity matrix obtained based on DPM-G model.}\label{fig:heat}
	\end{figure}
	
	\section{Conclusions}\label{sec:conclusions}
	
	The purpose of this paper was to investigate the behaviour of the multivariate DPM-G model when affine transformations are applied to the data. To this end we focused on the DPM-G model with independent normal and inverse-Wishart specification for the base measure.
	Our investigation covered both the finite sample size and the asymptotic framework. Specifically, in Proposition~\ref{prop:exa}, given any affine transformation $g$, an explicit model specification, depending on $g$, was derived so to ensure coherence between posterior inferences carried out based on a dataset or its transformation via $g$. 
	We then considered a different setting where the specification of the model is assumed independent of the specific transformation $g$. In this case, we formalised the notion of asymptotic robustness of a model under transformations of the data and identified mild conditions on the true data generating distributions which are sufficient to ensure that the DPM-G model features such a property. Specifically, Theorem~\ref{thm:asym} shows that the posterior distributions obtained conditionally on a dataset or any affine transformation of it, become more and more similar as the sample size grows to infinity. Inference on densities and, as suggested by the simulation study, on the clustering structure underlying the data, thus becomes increasingly less dependent on the affine transformation applied to the data, as the sample size grows. 
	As a special case, Theorem~\ref{thm:asym} implies that posterior inference based DPM-G models is asymptotically robust to data transformations commonly adopted for the sake of numerical efficiency, such as standardisation or normalisation. This observation is particularly relevant when dealing with the astronomical unsupervised clustering problem motivating this work. 
	Due to the lack of prior information on the dimensional constants relating different physical units, we resorted to a standardisation of each component of the data and chose an arbitrary model specification. Prior information was available in the form of the experts' prior opinion on the expected number of groups in the dataset and was used to elicit the hyperprior distribution for $\alpha$, the precision parameter of the DP. 

\section*{Acknowledgements}

This work was developed in the framework of the Ulysses Program for French-Irish collaborations (43135ZK) and the Grenoble Alpes Data Institute. The authors wish to thank \href{https://www.unibo.it/sitoweb/carlo.nipoti/en}{Carlo Nipoti} for suggesting the motivating astronomical problem, and \href{http://mistis.inrialpes.fr/people/girard/}{St\'ephane Girard} for helpful discussions on the set of assumptions of Theorem~\ref{thm:asym}. 
The authors are also grateful to \href{http://bf-astro.com/index.htm}{Bob Franke} for the picture in Figure~\ref{fig:glob}.

\appendix

\section{Appendix}\label{sec:proofs}
	
	
	\subsection{Proof of Proposition~\ref{prop:exa}}\label{subsec:pro1}
	
		Model $\tilde f_{\piv}$ can be written as
		\begin{align*}
		\tilde f_{\piv}(\xv)&=\int (2\pi)^{-\frac{d}{2}} \det(\Sigmam)^{-\frac{1}{2}} \exp\left\{-\frac{1}{2} (\xv-\muv)^\intercal \Sigmam^{-1} (\xv - \muv)\right\} \tilde{P}(\ddr \muv, \ddr \Sigmam; \piv)\\
		&=\int (2\pi)^{-\frac{d}{2}} |\det(\Cm)| \det(\Cm \Sigmam \Cm^\intercal)^{-\frac{1}{2}}\\
		&\times \exp\left\{-\frac{1}{2} (\Cm \xv+\bv- \Cm \muv-\bv)^\intercal (\Cm \Sigmam \Cm^\intercal)^{-1} (\Cm \xv+\bv - \Cm \muv-\bv)\right\} \tilde{P}(\ddr \muv, \ddr \Sigmam; \piv).
		\end{align*}
		By performing the change of variables $\Sm= \Cm\Sigmam \Cm^\intercal$ and $\mv=\Cm\muv+\bv$ and observing that, by standard properties of the inverse-Wishart and normal distributions, 
		\begin{itemize}
			\item[1.] $\Sigmam \sim IW(\nu_0,\Sm_0)$ implies $\Sm \sim IW (\nu_0,\Cm \Sm_0 \Cm^\intercal)$,
			\item[2.] $\muv \sim N_d(\mv_0,\Bm_0)$ implies $\mv \sim N_d(\Cm \mv_0+\bv,\Cm \Bm_0 \Cm^\intercal)$,
			\item[3.] $\Xv \sim N_d(\muv,\Sigmam)$ implies $\Cm \Xv+\bv \sim N_d(\mv,\Sm)$,\end{itemize}
		we obtain
		\begin{align*}
		\tilde f_{\piv}(\xv)&=|\det(\Cm)|\int (2\pi)^{-\frac{d}{2}} \det(\Sm)^{-\frac{1}{2}}\\
		&\times \exp\left\{-\frac{1}{2} (\Cm \xv+\bv- \mv)^\intercal \Sm^{-1} (\Cm \xv+\bv - \mv)\right\} \tilde{P}(\ddr \mv, \ddr \Sm; \piv_g)\\
		&=|\det(\Cm)| \tilde f_{\piv_g}(g(\xv)).
		\end{align*}
		
		A simple reparametrisation leads to $\tilde f_{\piv_g}=|\det(\Cm)|^{-1}\tilde f_{\piv}\circ  g^{-1}$. All the identities in this proof are deterministic, that is they hold for every $\omega\in\Omega$.


	\subsection{Proof of Theorem~\ref{thm:asym}}
	
	The proof of Theorem~\ref{thm:asym} relies on results proved by \citet{Can17}. 
	We start by deriving a set of simpler conditions implying those of \citet{Can17}.

	\begin{lemma}\label{lem:improve}
		Let $f^*$ be a density function on $\R^d$ that satisfy the conditions of Theorem~\ref{thm:asym}
		\begin{enumerate}
			\item[{\normalfont A1}.] $0 < f^*(\xv) < M$, for some constant $M$ and for all $\xv \in \R^d$,
			\item[{\normalfont A2}.] for some $\eta > 0$, $\int \|\xv\|^{2(1+\eta)}f^*(\xv)\ddr\xv < \infty$,
			\item[{\normalfont A3}.] $\xv\mapsto f^*(\xv)\log^2(\varphi_\delta (\xv))$ is bounded on $\R^d$, where $\varphi_\delta(\xv) = \inf_{\{\tv\,:\,\|\tv-\xv\|<\delta\}}f^*(\tv)$.
		\end{enumerate}
		Then $f^*$ also satisfies
		\begin{enumerate}
			\item[{\normalfont A4}.] $\left|\int f^*(\xv) \log f^*(\xv) \ddr \xv \right| < \infty$,
			\item[{\normalfont A5}.] 	$\exists\,\delta > 0$ such that $\int f^*(\xv) \log\left(f^*(\xv)/\varphi_\delta(\xv)\right)\ddr \xv < \infty.$
		\end{enumerate}
	\end{lemma}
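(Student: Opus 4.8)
The plan is to derive both A4 and A5 from a single mechanism: a partition of $\R^d$ into three regions, one handled by A1, one by A3, and one by A2. The first reduction is that A5 follows from A4 together with $\int f^*(\xv)\,|\log\varphi_\delta(\xv)|\,\ddr\xv<\infty$. Indeed $\varphi_\delta(\xv)\le f^*(\xv)$ pointwise (the infimum defining $\varphi_\delta$ includes the point $\xv$ itself), so $0\le\log\bigl(f^*(\xv)/\varphi_\delta(\xv)\bigr)\le|\log f^*(\xv)|+|\log\varphi_\delta(\xv)|$, and integrating against $f^*$ gives A5 once the two integrals on the right are finite. (One also records here that $\varphi_\delta$ is measurable: its sublevel sets $\{\varphi_\delta<t\}$ are the $\delta$-enlargements of $\{f^*<t\}$, hence open, so $\varphi_\delta$ is upper semicontinuous.) It therefore suffices to bound $\int f^*|\log f^*|$ and $\int f^*|\log\varphi_\delta|$, and for each of these I would split the domain, with $R>0$ to be fixed, as
\[
\R^d=\{f^*\ge 1\}\ \cup\ \bigl(\{\|\xv\|\le R\}\cap\{f^*<1\}\bigr)\ \cup\ \bigl(\{\|\xv\|>R\}\cap\{f^*<1\}\bigr),
\]
and correspondingly with $\varphi_\delta$ in place of $f^*$.

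On $\{f^*\ge1\}$ the hypothesis A1 suffices: there $0\le f^*\log f^*\le f^*\log M$, and when $\varphi_\delta\ge1$ one has $\varphi_\delta\le f^*<M$, so $0\le f^*\log\varphi_\delta\le f^*\log M$ as well; both integrate to at most $\log M$. On the bounded piece $\{\|\xv\|\le R\}\cap\{f^*<1\}$ — where also $\varphi_\delta<1$ — I would use A3. Setting $B:=\sup_{\xv}f^*(\xv)\log^2\varphi_\delta(\xv)<\infty$ and using $0<-\log f^*\le-\log\varphi_\delta$ on this set,
\[
f^*(\xv)\,|\log f^*(\xv)|\ \le\ f^*(\xv)\,|\log\varphi_\delta(\xv)|\ =\ \sqrt{f^*(\xv)}\;\sqrt{f^*(\xv)\log^2\varphi_\delta(\xv)}\ \le\ \sqrt{MB},
\]
a bounded integrand on a set of finite Lebesgue measure. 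On the tail $\{\|\xv\|>R\}\cap\{f^*<1\}$ I would invoke A2 through the elementary inequality $t\,|\log t|\le(\kappa\e)^{-1}\,t^{1-\kappa}$, valid for $t\in(0,1]$ and any $\kappa\in(0,1)$: it then suffices to prove $\int_{\|\xv\|>R}(f^*)^{1-\kappa}\,\ddr\xv<\infty$ for some $\kappa$. Writing $(f^*)^{1-\kappa}=\bigl(\|\xv\|^{2(1+\eta)}f^*\bigr)^{1-\kappa}\|\xv\|^{-2(1+\eta)(1-\kappa)}$ and applying Hölder's inequality with exponents $(1-\kappa)^{-1}$ and $\kappa^{-1}$, the first factor is finite by A2 and the second equals $\int_{\|\xv\|>R}\|\xv\|^{-2(1+\eta)(1-\kappa)/\kappa}\,\ddr\xv$, which converges once $2(1+\eta)(1-\kappa)/\kappa>d$, i.e. for $\kappa$ small enough. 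This yields $\int f^*|\log f^*|<\infty$, and hence A4, since that bound controls both the positive and the negative part of $f^*\log f^*$.

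For $\int f^*|\log\varphi_\delta|$ the first two regions are handled exactly as above (again with the A3 bound $f^*|\log\varphi_\delta|\le\sqrt{MB}$ on the bounded region). The remaining piece, $\{\|\xv\|>R\}\cap\{\varphi_\delta<1\}$, is the genuinely delicate one and I expect it to be the \emph{main obstacle} of the proof: in the tail $\varphi_\delta$ can be much smaller than $f^*$ inside local valleys of $f^*$, so $|\log\varphi_\delta|$ is not dominated by $|\log f^*|$, and the naive bound $f^*|\log\varphi_\delta|\le\sqrt{B\,f^*}$, combined with A2 via Hölder as above, is not sharp enough to be integrable over the tail in higher dimension. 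To close it I would instead combine a layer-cake decomposition
\[
\int_{\{\|\xv\|>R\}\cap\{\varphi_\delta<1\}}\!\!f^*(\xv)\,|\log\varphi_\delta(\xv)|\,\ddr\xv
=\int_0^{\infty}\Bigl(\,\int_{\{\|\xv\|>R,\ \varphi_\delta(\xv)<\e^{-s}\}}\!\!f^*(\xv)\,\ddr\xv\Bigr)\,\ddr s
\]
with two ingredients: (i) on $\{\varphi_\delta<\e^{-s}\}$ the bound A3 forces $f^*(\xv)\le B/\log^2\varphi_\delta(\xv)\le B/s^2$; and (ii) $\{\varphi_\delta<\e^{-s}\}$ is the $\delta$-enlargement of the sublevel set $\{f^*<\e^{-s}\}$, so A2 — via the tail bound $\int_{\|\xv\|>\rho}f^*\le\rho^{-2(1+\eta)}\!\int\|\xv\|^{2(1+\eta)}f^*$ and the induced control on the Lebesgue measure of $\{\|\xv\|>\rho\}\cap\{f^*<t\}$-type sets — bounds the inner integral so that the $s$-integral converges. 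All three hypotheses enter together precisely here: A1 to localise away from zeros, A2 for the decay in the tail, and A3 to keep $\varphi_\delta$ from collapsing relative to $f^*$.

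Putting the three regional estimates together gives $\int f^*|\log\varphi_\delta|\,\ddr\xv<\infty$; together with A4 and $0\le f^*\log(f^*/\varphi_\delta)\le f^*|\log f^*|+f^*|\log\varphi_\delta|$ this yields A5. Throughout one is free to use, in A4 and A5, the same $\delta$ for which A3 holds.
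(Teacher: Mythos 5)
Your reductions (measurability of $\varphi_\delta$, $\varphi_\delta\le f^*$, the three-region split) and your proof of A4 are sound; in fact the tail bound $t|\log t|\le(\kappa\e)^{-1}t^{1-\kappa}$ followed by H\"older with $\kappa$ small enough that $2(1+\eta)(1-\kappa)/\kappa>d$ works in any dimension and is arguably cleaner than the paper's argument. The genuine gap is exactly the step you yourself flag as the main obstacle: the tail estimate for $\int_{\|\xv\|>R} f^*|\log\varphi_\delta|\,\ddr\xv$. There you only sketch a layer-cake argument whose ingredient (ii) is never made precise, and the two ingredients you name are quantitatively insufficient. From (i) and the A2 Markov bound, the inner integral $F^*(\{\|\xv\|>R,\ \varphi_\delta<\e^{-s}\})$ can only be controlled by splitting at a radius $\rho$: it is at most $\frac{B}{s^2}\,\omega_d\rho^d+m_\eta\,\rho^{-2(1+\eta)}$, where $\omega_d$ is the volume of the unit ball and $m_\eta=\int\|\xv\|^{2(1+\eta)}f^*\ddr\xv$ (A2 controls $f^*$-mass, not Lebesgue measure, so nothing better is available from it). Optimising over $\rho$ gives the order $s^{-4(1+\eta)/(d+2+2\eta)}$, whose integral in $s$ converges only when $d<2+2\eta$, i.e. essentially only for $d\le 2$ when $\eta$ is small. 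So, as written, your argument does not close for $d\ge3$; the key step of the lemma is asserted rather than proved.

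The paper takes a different route at precisely this point, and never passes through the pointwise bound $|\log\varphi_\delta|\le\sqrt{B/f^*}$ (you are right that this bound, and hence $f^*|\log\varphi_\delta|\le\sqrt{Bf^*}$, is too lossy). Instead it applies Cauchy--Schwarz to the product directly on the tail region,
\[
\int_{\|\xv\|>a} f^*|\log\varphi_\delta|\,\ddr\xv\ \le\ \Bigl(\int\|\xv\|^{2}f^*\,\ddr\xv\Bigr)^{1/2}\Bigl(\int_{\|\xv\|>a}\|\xv\|^{-2}\,f^*\log^2\varphi_\delta\,\ddr\xv\Bigr)^{1/2},
\]
so that A2 feeds the first factor while A3 keeps the whole product $f^*\log^2\varphi_\delta$ bounded in the second, leaving only the integrability of a negative power of $\|\xv\|$ to check; this settles $d=1$ at once, and for general $d$ the paper invokes repeated applications of Cauchy--Schwarz to place an integrable power $\|\xv\|^{-p}$, $p>d$, in the second factor. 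To complete your proof you need an estimate of this type, which keeps $f^*\log^2\varphi_\delta$ intact against an integrable weight rather than degrading it to $\sqrt{Bf^*}$ or to a level-set bound; note, when doing so, that balancing the moment of order $2(1+\eta)$ granted by A2 against the required power $p>d$ in dimension $d\ge3$ is itself delicate and deserves to be written out explicitly — it is precisely the point your sketch leaves open.
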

	
	\begin{proof}[Lemma~\ref{lem:improve}]
		We prove that A4 is satisfied by first assuming that $f^*$ is univariate. Since the function $u\mapsto u|\log u|$ is continuous from $\R_+$ to $\R_+$, the function $\xv\mapsto f^*(\xv) |\log f^*(\xv)|$ is bounded by assumption A1. Thus,
		for any $a>0$,
		$$\int_{-a}^a f^*(\xv) \left|\log f^*(\xv)\right| \ddr \xv < \infty.$$
		Then, integrating over the remaining part of the support $R_a=(-\infty,-a)\cup (a,+\infty)$ yields
		\begin{align*}
		\left|\int_{R_a} f^*(\xv) \log f^*(\xv) \ddr \xv \right|
		&\leq
		\int_{R_a} f^*(\xv) \left|\log f^*(\xv) \right| \ddr \xv
		=
		\int_{R_a} \xv \sqrt{f^*(\xv)} \frac{\sqrt{f^*(\xv)}}{\xv} \left|\log f^*(\xv) \right| \ddr \xv
		\\
		&\leq
		\left(\int_{R_a} \|\xv\|^2 f^*(\xv) \ddr\xv \int_{R_a} \|\xv\|^{-2} f^*(\xv) (\log f^*(\xv))^2 \ddr \xv\right)^{1/2},
		\end{align*}
		by Jensen's inequality and Cauchy--Schwarz inequality respectively. The first integral in the right-hand side above is finite by assumption A2. 
		For the same reason as above, the function $\xv\mapsto f^*(\xv) (\log f^*(\xv))^2$ is bounded. Since $\xv \mapsto \|\xv\|^{-2}$ is integrable on $R_a$, the second  integral in the right-hand side above is also finite. 
		In dimension $d$, the same argument holds by applying Cauchy--Schwarz' inequality several times so as to obtain an integrable power  $\|\xv\|^{-p}$, $p>d$, in the second integral.\\ 
		In order to prove A5, we note that 
		\begin{align*}
		\left|\int f^*(\xv) \log\left(f^*(\xv)/\varphi_\delta(\xv)\right)\ddr \xv\right|
		&=
		\left|\int f^*(\xv) \log\left(f^*(\xv)\right)\ddr \xv-
		\int f^*(\xv) \log\left(\varphi_\delta(\xv)\right)\ddr \xv\right|\\
		&\leq
		\left|\int f^*(\xv) \log\left(f^*(\xv)\right)\ddr \xv\right|+
		\int f^*(\xv) \left|\log\left(\varphi_\delta(\xv)\right)\right|\ddr \xv.
		\end{align*}
		We proved that the first  integral in the right-hand side above is finite. The fact that the second integral is also bounded is proved exactly in the same way by using assumption A3.
	\end{proof}

	Let $\lambdav(\Sigmam^{-1}):=(\lambda_1(\Sigmam^{-1}), \dots, \lambda_d(\Sigmam^{-1}))$ be the vector of eigenvalues, in increasing order, of $\Sigmam^{-1}$, the precision matrix of the Gaussian kernel. Henceforth we write $f(x)\lesssim g(x)$ to indicate that the inequality $f(x)\leq b g(x)$ holds for some positive constant $b$ and for any $x$.
	
	\begin{theo}\citep[Theorem 2 in][]{Can17}.\label{th:cons}
		Let $f^* \in \mathscr{F}$, true generating density of $\Xv^{(n)}$, satisfy the conditions stated as assumptions A1, A2, A4 and A5 in Lemma~\ref{lem:improve}. 
		Let model $\Xv^{(n)}$ by means of a DPM-G model defined in \eqref{eq:DPmmls}. 
		Suppose that the base measure $P_0$ has the product form $P_0(\ddr \muv,\ddr \Sigmam)=P_{0,1}(\ddr \muv) P_{0,2}(\ddr \Sigmam)$ and that $P_{0,1}$ and $P_{0,2}$ satisfy the following conditions: for some positive constants $c_1,\, c_2,\, c_3,\, r> (d-1)/2$ and $\kappa > d(d - 1)$,
		\begin{enumerate}
			\item[B1.] $P_{0,1}(\|\muv\|>x) \lesssim x^{-2(r+1)}$,
			\item[B2.] $P_{0,2}(\lambda_d(\Sigmam^{-1})>x) \lesssim \exp\left\{-c_1 x^{c_2}\right\}$,
			\item[B3.] $P_{0,2}\left(\lambda_1(\Sigmam^{-1})< \frac{1}{x}\right) \lesssim x^{-c_3}$,
			\item[B4.] $P_{0,2}\left(\frac{\lambda_d(\Sigmam^{-1})}{\lambda_1(\Sigmam^{-1})} > x\right) \lesssim x^{-\kappa}$,
		\end{enumerate}
		all for any sufficiently large  $x$.  
		Then the posterior distribution $\Pi(\cdot|\Xv^{(n)})$ is consistent at $f^*$, that is, for every $\varepsilon>0$,
		\begin{equation*}
		\Pi \left(f : \rho(f, f^*) < \varepsilon \mid \Xv^{(n)}\right) \longrightarrow 1,
		\end{equation*}
		in $F^*_n$-probability, as $n\to\infty$.
	\end{theo}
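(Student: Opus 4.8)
The statement is Theorem~2 of \citet{Can17}: relative to their formulation the only change is that the hypotheses on $f^*$ are expressed through assumptions A1, A2, A4 and A5, and by Lemma~\ref{lem:improve} these are implied by the more elementary A1--A3, so strictly speaking nothing new need be proved beyond citing \citet{Can17}. For completeness I would recall the structure of the argument, which is the classical route to posterior consistency for density estimation via Schwartz's theorem \citep[see][]{hjort2010bayesian}. The plan rests on two pillars: (i) the \emph{Kullback--Leibler property}, i.e.\ that the prior $\Pi$ charges every neighbourhood $\{f : \mathrm{KL}(f^*,f) < \varepsilon\}$ with positive probability; and (ii) for every $\varepsilon>0$ the existence of sieves $\mathcal{F}_n \subset \mathscr{F}$ such that $\Pi(\mathcal{F}_n^c)$ is exponentially small and $f^*$ can be separated from $\{\rho(f,f^*) > \varepsilon\}$ by tests that are exponentially consistent uniformly over $\mathcal{F}_n$. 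Granting (i) and (ii), Schwartz's theorem delivers $\Pi(f:\rho(f,f^*)<\varepsilon\mid \Xv^{(n)})\to 1$ in $F^*_n$-probability.

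For pillar (i) I would first show that the Gaussian convolution $f^*_\sigma := f^* * \phi_d(\cdot;\zerov,\sigma^2 I)$ approaches $f^*$ in Kullback--Leibler divergence as $\sigma\downarrow 0$; this is precisely where A1 (boundedness and full support), A2 (a moment of order strictly above two), A4 (finite entropy) and A5 (regular local entropy) are used, along the lines of \citet{Wu08}. I would then discretise $f^*_\sigma$ into a finite Gaussian mixture whose mixing measure is supported on a large ball of the $(\muv,\Sigmam)$-space, and invoke the full weak support of the Dirichlet process \citep{Fer73} together with the product normal/inverse-Wishart form of $P_0$ to conclude that $\Pi$ puts positive mass on a weak neighbourhood of the corresponding discrete mixing measure; conditions B1--B3 ensure that the relevant region of parameter space is not $P_0$-null. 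A continuity argument then upgrades weak closeness of mixing measures to Kullback--Leibler closeness of the induced densities.

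For pillar (ii) I would take $\mathcal{F}_n$ to be the Gaussian mixtures whose mixing measure is supported on $\{\|\muv\|\le a_n,\ \lambda_d(\Sigmam^{-1})\le b_n,\ \lambda_1(\Sigmam^{-1})\ge 1/c_n,\ \lambda_d(\Sigmam^{-1})/\lambda_1(\Sigmam^{-1})\le d_n\}$ for sequences $a_n,b_n,c_n,d_n$ growing at most polynomially or sub-exponentially. A union bound over the four defining events, fed with the tail estimates B1--B4, gives $\Pi(\mathcal{F}_n^c)\lesssim e^{-cn}$ for some $c>0$; the thresholds $r>(d-1)/2$ and $\kappa>d(d-1)$ are exactly what make this compatible with the next step, a metric-entropy bound $\log N(\varepsilon,\mathcal{F}_n,\rho)=o(n)$ for constrained location-scale Gaussian mixtures, after which the required uniformly exponentially consistent tests follow from standard testing lemmas.

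The delicate point is the reconciliation of the two pillars: the sieves must be generous enough not to waste the prior mass that (i) places near $f^*$, yet parsimonious enough, in metric-entropy terms, for the testing bound to dominate the $e^{n\varepsilon}$ term arising from the posterior denominator. The hard estimate is controlling the entropy of multivariate Gaussian mixtures when the eigenvalues of the precision matrix may simultaneously be very large and very anisotropic — which is exactly what \citet{Can17} carry out in detail, and our only addition is the elementary observation, through Lemma~\ref{lem:improve}, that their hypotheses on $f^*$ reduce to A1--A3.
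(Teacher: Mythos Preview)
Your proposal is correct: the paper does not prove this statement at all, merely citing it as Theorem~2 of \citet{Can17}, and you rightly observe at the outset that nothing need be proved beyond the citation. The Schwartz-type sketch you add for completeness is a faithful outline of how \citet{Can17} proceed, but strictly speaking it exceeds what the paper itself does.
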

	
	Theorem~\ref{th:cons} provides general conditions on the base measure $P_0$ which guarantee consistency of the posterior distribution. The next lemma shows that these conditions are met by the normal/inverse-Wishart base measure \eqref{eq:base}.
	\begin{lemma}\label{lemma1}
		Conditions  B1--B4 of Theorem~\ref{th:cons} are satisfied by the multivariate normal/inverse-Wishart base measure \eqref{eq:base} with $\nu_0>(d + 1)(2d - 3)$.
	\end{lemma}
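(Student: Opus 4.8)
The plan is to verify the four conditions B1--B4 of Theorem~\ref{th:cons}, one at a time, for the product base measure $P_0 = P_{0,1}\times P_{0,2}$ with $P_{0,1}=N_d(\mv_0,\Bm_0)$ and $P_{0,2}=IW(\nu_0,\Sm_0)$, using throughout the equivalence $\Sigmam\sim IW(\nu_0,\Sm_0)\Leftrightarrow\Sigmam^{-1}\sim W_d(\nu_0,\Sm_0^{-1})$ and the explicit marginal law of the diagonal entries of an inverse-Wishart matrix. Conditions B1 and B2 will be immediate; the substance of the lemma, and the origin of the threshold $\nu_0>(d+1)(2d-3)$, lies in B3 and especially B4.

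For B1 I would write $\muv\dis\mv_0+\Bm_0^{1/2}\z$ with $\z\sim N_d(\zerov,I_d)$, so that $\|\muv\|\le\|\mv_0\|+\sqrt{\lambda_d(\Bm_0)}\,\|\z\|$ with $\|\z\|^2\sim\chi^2_d$; hence $P_{0,1}(\|\muv\|>x)$ decays faster than any power of $x$ and B1 holds for every $r$, in particular for some $r>(d-1)/2$. For B2 I would bound $\lambda_d(\Sigmam^{-1})\le\Tr(\Sigmam^{-1})$ and observe that, writing $\Sigmam^{-1}=\sum_{k=1}^{\nu_0}\z_k\z_k^\intercal$ with $\z_k\simiid N_d(\zerov,\Sm_0^{-1})$ (the non-integer $\nu_0$ case following directly from the Wishart density), $\Tr(\Sigmam^{-1})=\sum_k\|\z_k\|^2$ has a moment generating function finite in a neighbourhood of the origin, hence a sub-exponential tail; this gives B2 with $c_2=1$.

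The key analytic input for B3 and B4 is that a diagonal entry $(\Sigmam)_{jj}$ of $\Sigmam\sim IW(\nu_0,\Sm_0)$ has a scaled inverse-$\chi^2$ (equivalently, inverse-gamma) marginal distribution whose degrees-of-freedom parameter is an increasing affine function of $\nu_0$; consequently $\E[(\Sigmam)_{jj}^{\,k}]<\infty$ precisely for $k$ below a threshold that is linear in $\nu_0$, and the same holds for $\Tr(\Sigmam)=\sum_j(\Sigmam)_{jj}$ and therefore for $\lambda_d(\Sigmam)\le\Tr(\Sigmam)$. Condition B3 then follows immediately: $\{\lambda_1(\Sigmam^{-1})<1/x\}=\{\lambda_d(\Sigmam)>x\}$, and Markov's inequality with any admissible moment of $\Tr(\Sigmam)$ yields $P_{0,2}(\lambda_1(\Sigmam^{-1})<1/x)\lesssim x^{-c_3}$ for some $c_3>0$, using only $\nu_0>d-1$. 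For B4 I would use the deterministic inequality $\lambda_d(\Sigmam^{-1})/\lambda_1(\Sigmam^{-1})=\lambda_d(\Sigmam^{-1})\,\lambda_d(\Sigmam)\le\Tr(\Sigmam^{-1})\,\Tr(\Sigmam)$ and, for a suitable $\theta\in(0,1)$, the inclusion $\{\Tr(\Sigmam^{-1})\,\Tr(\Sigmam)>x\}\subseteq\{\Tr(\Sigmam^{-1})>x^{1-\theta}\}\cup\{\Tr(\Sigmam)>x^{\theta}\}$: the first event has sub-exponentially small probability by the estimate used for B2, hence is negligible at every polynomial rate since $\theta<1$, while the second has probability $\lesssim x^{-\theta k}$ by Markov, for any $k$ below the moment threshold of the diagonal entries. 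Letting $\theta\uparrow1$ gives $P_{0,2}(\lambda_d(\Sigmam^{-1})/\lambda_1(\Sigmam^{-1})>x)\lesssim x^{-\kappa}$ for every $\kappa$ strictly below that linear-in-$\nu_0$ threshold; demanding that some such $\kappa$ exceed $d(d-1)$ is exactly what produces the bound $\nu_0>(d+1)(2d-3)$.

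The main obstacle is B4, where two points demand care. First, one must identify the correct heavy tail: it is carried by $\lambda_d(\Sigmam)$, the largest eigenvalue of the inverse-Wishart --- equivalently the reciprocal of the smallest eigenvalue of the Wishart $\Sigmam^{-1}$ --- whose tail is only polynomial, whereas the complementary direction, $\lambda_1(\Sigmam)$ small (that is, the largest Wishart eigenvalue large), is sub-exponentially light and contributes nothing to the polynomial rate; the decomposition above is designed to exploit precisely this asymmetry. Second, one must compute the polynomial tail of $\lambda_d(\Sigmam)$ sharply, through the scaled inverse-$\chi^2$ marginal of the diagonal of an inverse-Wishart, so that the number of its finite moments emerges as the correct affine function of $\nu_0$; balancing this against $d(d-1)$ pins down the specific constant $(d+1)(2d-3)$. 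Checking that arithmetic, and checking that a general (non-scalar) $\Sm_0$ affects only multiplicative constants and not tail exponents, is the only genuinely delicate bookkeeping; the remainder is routine.
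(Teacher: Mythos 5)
Your proposal is correct in substance and reaches the lemma by a genuinely different route for the two conditions that carry the weight, B3 and B4. The paper works with the explicit joint density of the eigenvalues of $\Sigmam^{-1}$: it bounds the Vandermonde factor by $\prod_k x_k^{k-1}$, marginalises to get a bound on the density of $\lambda_1(\Sigmam^{-1})$ (which gives B3), and then, after the change of variables $(\lambda_1,\dots,\lambda_{d-1},\lambda_d)\mapsto(\lambda_1,\dots,\lambda_{d-1},Z)$ with $Z=\lambda_d(\Sigmam^{-1})/\lambda_1(\Sigmam^{-1})$, obtains an explicit polynomial bound on $f_Z$, hence $\kappa=(\nu_0-d+3)/2$ and the stated threshold on $\nu_0$. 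You avoid eigenvalue densities altogether: B3 via $\lambda_d(\Sigmam)\le\Tr(\Sigmam)$ and the inverse-gamma marginals of the diagonal entries, and B4 via the bound $Z\le\Tr(\Sigmam^{-1})\,\Tr(\Sigmam)$ together with the event splitting that sends the $\Tr(\Sigmam^{-1})$ direction into the sub-exponential regime already established for B2, so that only the polynomial tail of $\Tr(\Sigmam)$ matters. Your structural insight --- that the heavy tail of the condition number is carried entirely by $\lambda_1(\Sigmam^{-1})$ being small --- is exactly what the paper's exact computation confirms, since the ratio $Z$ ends up with the same tail exponent as $1/\lambda_1(\Sigmam^{-1})$. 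Moreover, because B4 only requires the existence of \emph{some} $\kappa>d(d-1)$, the $\varepsilon$-loss inherent in Markov's inequality (you obtain every $\kappa$ strictly below the diagonal moment threshold, rather than the threshold itself) costs nothing: the condition on $\nu_0$ produced is the same open inequality. What the paper's route buys is an explicit exponent; what yours buys is elementary bookkeeping (inverse-gamma marginals, Markov, a union bound) with no Jacobian or Vandermonde manipulations, and a cleaner B1 as well (the paper's claim $\|\muv\|^2\sim\chi^2_d(\delta)$ is exact only for $\Bm_0=\mathbf{I}_d$, an issue your decomposition sidesteps).

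One caveat: you explicitly leave the final arithmetic unchecked, and that is where the constant $(d+1)(2d-3)$ is decided. Under the parametrisation with $\E[\Sigmam]=\Sm_0/(\nu_0-d-1)$, used elsewhere in the paper, the diagonal entries have inverse-gamma shape $(\nu_0-d+1)/2$, and your argument then yields the requirement $\nu_0>2d^2-d-1$; the paper's eigenvalue-density computation, whose per-eigenvalue exponent $(\nu_0-d+1)/2$ corresponds instead to a diagonal shape of $(\nu_0-d+3)/2$, yields $\nu_0>2d^2-d-3=(d+1)(2d-3)$. So whether your bookkeeping lands exactly on $(d+1)(2d-3)$ or on a threshold larger by $2$ hinges on the inverse-Wishart convention; this off-by-two ambiguity is internal to the paper's own proof and does not affect the validity of your strategy, but you should fix the convention before claiming the precise constant.
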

	Although the proof of Lemma~\ref{lemma1} can be found in \citet{Can17} (their Corollary 1,  relying, in turn, on results by \citet{She13}), we provide it in Appendix~\ref{sec:lem1} for the sake of completeness and in order to account for the slightly different prior specification considered in this work.
	Next  lemma shows that if $f^*$ satisfies conditions A1--A3 of Theorem~\ref{thm:asym}, so does $f^*_g:=|\det(\Cm)|^{-1}f^*\circ g^{-1}$, for any invertible affine transformation $g$.
	\begin{lemma}\label{lem:f0c}
		If conditions A1--A3 of Theorem~\ref{thm:asym} are satisfied by $f^*$, then for any  invertible affine transformation $g(\xv)=\Cm \xv + \bv$, they are also satisfied by $f^*_g=|\det(\Cm)|^{-1}f^*\circ g^{-1}$.
	\end{lemma}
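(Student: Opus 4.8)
The plan is to verify assumptions A1--A3 for $f^*_g$ one at a time, using throughout that $g^{-1}(\yv)=\Cm^{-1}(\yv-\bv)$ is an affine bijection of $\R^d$, that $f^*_g(\yv)=|\det(\Cm)|^{-1}f^*(g^{-1}(\yv))$, and that the substitution $\yv=g(\xv)$ sends $f^*_g(\yv)\,\ddr\yv$ to $f^*(\xv)\,\ddr\xv$. Assumptions A1 and A2 are the easy ones. For A1, surjectivity of $g^{-1}$ turns positivity of $f^*$ into positivity of $f^*_g$ on all of $\R^d$, and $f^*_g<|\det(\Cm)|^{-1}M$ provides a new upper bound. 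For A2, I would change variables $\yv=g(\xv)$ to rewrite $\int\|\yv\|^{2(1+\eta)}f^*_g(\yv)\,\ddr\yv$ as $\int\|\Cm\xv+\bv\|^{2(1+\eta)}f^*(\xv)\,\ddr\xv$, bound $\|\Cm\xv+\bv\|\le\|\Cm\|_{\mathrm{op}}\|\xv\|+\|\bv\|\lesssim 1+\|\xv\|$, and then $(1+\|\xv\|)^{2(1+\eta)}\lesssim 1+\|\xv\|^{2(1+\eta)}$, so that finiteness is inherited from A2 for $f^*$ with the same $\eta$.

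The substance of the argument lies in A3. First I would rewrite the analogue $\varphi^g_\delta(\yv):=\inf_{\{\tv\,:\,\|\tv-\yv\|<\delta\}}f^*_g(\tv)$ in terms of $f^*$: fixing $\xv$ and $\yv=g(\xv)$, the substitution $\sv=g^{-1}(\tv)$ inside the infimum gives $\varphi^g_\delta(\yv)=|\det(\Cm)|^{-1}\inf_{\{\sv\,:\,\|\Cm(\sv-\xv)\|<\delta\}}f^*(\sv)$. The key geometric point is that the ellipsoid $\{\sv:\|\Cm(\sv-\xv)\|<\delta\}$ is squeezed between the Euclidean balls about $\xv$ of radii $\delta/\|\Cm\|_{\mathrm{op}}$ and $\delta\|\Cm^{-1}\|_{\mathrm{op}}$, which, after taking infima of $f^*$, yields
\[
|\det(\Cm)|^{-1}\varphi_{\delta\|\Cm^{-1}\|_{\mathrm{op}}}(\xv)\ \le\ \varphi^g_\delta(\yv)\ \le\ |\det(\Cm)|^{-1}\varphi_{\delta/\|\Cm\|_{\mathrm{op}}}(\xv).
\]
Then, letting $\delta_0>0$ be a radius for which A3 holds for $f^*$ and choosing $\delta:=\delta_0/\|\Cm^{-1}\|_{\mathrm{op}}$, monotonicity of $\delta'\mapsto\varphi_{\delta'}(\xv)$ forces $w:=|\det(\Cm)|\,\varphi^g_\delta(\yv)$ into the interval $[\varphi_{\delta_0}(\xv),f^*(\xv)]$. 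Combining $(\log w-\log|\det(\Cm)|)^2\le 2\log^2 w+2\log^2|\det(\Cm)|$, the inequality $|\log w|\le\max(|\log\varphi_{\delta_0}(\xv)|,|\log f^*(\xv)|)$, and $f^*(\xv)<M$, I would reach
\[
f^*_g(\yv)\,\log^2\varphi^g_\delta(\yv)\ \lesssim\ f^*(\xv)\log^2\varphi_{\delta_0}(\xv)+f^*(\xv)\log^2 M+f^*(\xv)\log^2|\det(\Cm)|,
\]
whose right-hand side is bounded on $\R^d$: the first term by A3 for $f^*$, the other two because $f^*<M$. Since $\yv=g(\xv)$ runs over all of $\R^d$, this gives A3 for $f^*_g$ with the radius $\delta$.

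The main obstacle is entirely inside A3: one must quantify how the open $\delta$-ball around $\yv$ is distorted by the linear part $\Cm^{-1}$ of $g^{-1}$, and then carry that distortion through $\log^2(\cdot)$ while absorbing the additive shift $\log|\det(\Cm)|$ coming from the Jacobian. Both are handled by the operator-norm sandwich above together with the freedom to pick the radius in A3 for $f^*_g$ small enough that both comparison radii fall below a radius that already works for $f^*$; a minor point is that A3 for $f^*$ implicitly forces $\varphi_{\delta_0}>0$ everywhere, which is what keeps $\log^2 w$ finite. Steps A1 and A2 are by contrast routine consequences of the change-of-variables formula and elementary norm estimates.
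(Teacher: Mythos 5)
Your proof is correct and follows the same point-by-point verification as the paper; the only real difference is that the paper disposes of A3 in a single sentence (``the boundedness carries over''), whereas your operator-norm sandwich of the ellipsoid $\{\sv:\|\Cm(\sv-\xv)\|<\delta\}$ between Euclidean balls, the choice $\delta=\delta_0/\|\Cm^{-1}\|_{\mathrm{op}}$, and the handling of the $\log|\det(\Cm)|$ shift supply exactly the detail that assertion hides. Incidentally, in A2 your bound $\|\Cm\xv+\bv\|\le\|\Cm\|_{\mathrm{op}}\|\xv\|+\|\bv\|$ uses the right constant, while the paper's displayed bound with the factor $|\det(\Cm)|^{2(1+\eta)}$ is not a valid majorant of $\|\Cm\yv\|^{2(1+\eta)}$ in general (the finiteness conclusion is of course unaffected).
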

	The proof of Lemma~\ref{lem:f0c} is postponed to Appendix~\ref{sec:lem1}. An analogous result is proved by \citet{shi2018} (see their Lemma 2), although for a different set of assumptions on the true data generating density. 
	We are now ready to prove Theorem~\ref{thm:asym} by combining Theorem~\ref{th:cons} with Lemma~\ref{lem:improve},  Lemma~\ref{lemma1}  and Lemma~\ref{lem:f0c}.
	\begin{proof}[Theorem~\ref{thm:asym}]
		According to Lemma~\ref{lem:improve}, the set of assumptions A1, A2, A4 and A5 (as appearing in Lemma~\ref{lem:improve}) is implied by assumptions A1, A2 and A3 of Theorem~\ref{thm:asym}. So under assumptions A1, A2 and A3, Theorem~\ref{th:cons} holds. 
		By combining it with Lemma~\ref{lemma1} and Lemma~\ref{lem:f0c}, we have that for any $\epsilon>0$, 
		\begin{align}
		&\Pi\left(f:\rho(f,f^*)<\sfrac{\epsilon}{2}\mid \Xv^{(n)}\right)\longrightarrow 1,\label{eq:lim1}\\
		&\Pi\left(f:\rho(f,f_g^*)<\sfrac{\epsilon}{2}\mid g(\Xv)^{(n)}\right)\longrightarrow 1,\label{eq:lim2},
		\end{align}
		both in $F^*_n$-probability, as $n\to\infty$. 
		We notice that the distance $\rho$ is invariant with respect to change of variables and thus  
		$\rho(|\det(\Cm)| f_2\circ g,f^*) = \rho(f_2,f^*_g)$. This, combined with the triangular inequality, leads to 
		\begin{align*}
		&\Pi_2((f_1,f_2): \rho(f_1,|\det(\Cm)|f_2\circ g)<\epsilon\mid \Xv^{(n)})\\
		&\geq \Pi_2\left((f_1,f_2): \rho(f_1,f^*)<\sfrac{\epsilon}{2}, \,\rho(f_2,f_g^*)<\sfrac{\epsilon}{2}\mid \Xv^{(n)}\right)\\
		&\geq \Pi_2\left((f_1,f_2): \rho(f_1,f^*)<\sfrac{\epsilon}{2}\mid \Xv^{(n)}\right) + 
		\Pi_2\left((f_1,f_2): \rho(f_2,f_g^*)<\sfrac{\epsilon}{2}\mid \Xv^{(n)}\right) - 1\\
		&=   \Pi\left(f_1: \rho(f_1,f^*)<\sfrac{\epsilon}{2}\mid \Xv^{(n)}\right) + 
		\Pi\left(f_2: \rho(f_2,f_g^*)<\sfrac{\epsilon}{2}\mid g(\Xv)^{(n)}\right) - 1\\
		&\quad\quad\longrightarrow 1+1-1 =1,
		\end{align*}
		in $F_n^*$-probability, as $n\to\infty$. As a result, 
		\begin{equation*}
		\Pi_2\left((f_1,f_2): \rho(f_1,|\det(\Cm)| f_2 \circ g)<\varepsilon \mid \Xv^{(n)}\right)\longrightarrow 1,
		\end{equation*}
		in $F_{n}^*$-probability, as $n\to\infty$.
	\end{proof}
	
	
	\subsection{Proof of additional lemmas}
	
	\begin{proof}[Lemma~\ref{lemma1}\label{sec:lem1}]
		We check, point-by-point, that the conditions of Theorem~\ref{th:cons} are satisfied.
		\begin{enumerate}
			\item[B1.] Since $\muv \sim N_d(\mv_0, \Bm_0)$, then $\|\muv\|^2 \sim \chi^2_d(\delta)$ where $d$ is the dimension of $\muv$ and $\delta = \|\mv_0\|$ is the non-centrality parameter of the chi-squared distribution. Then, for sufficiently large $x$,
			\begin{equation*}
			P_{0,1}\left(\|\muv\|^2 > x\right) \leq \left(\frac{x}{d} \right)^{\frac{d}{2}} \exp\left\{\frac{d-x}{2}\right\} \lesssim x^{-2(r+1)},
			\end{equation*}
			which holds for $r > (d-1)/2$.
			
			\item[B2.] We know that $\Sigmam \sim IW(\nu_0, \Sm_0)$ and we start by considering the case corresponding to $\Sm_0 = \mathbf{I}_d$, where $\mathbf{I}_d$ denotes the $d$-dimensional identity matrix. It is known that $\Tr(\Sigmam^{-1})\sim \chi^2_{\nu_0 d}$. 
			Thus, for sufficiently large $x$,
			\begin{align*}
			P_{0,2}\left(\lambda_d(\Sigmam^{-1}) > x\right) &\leq P_{0,2}\left(\Tr(\Sigmam^{-1}) > x\right)\\
			&\leq \left(\frac{x}{\nu_0 d} \right)^{\frac{\nu_0 d}{2}} \exp \left\{ \frac{\nu_0 d - x}{2} \right\}\\
			&\lesssim  \exp\left\{-c_1 x^{c_2}\right\},
			\end{align*}
			for some positive constants $c_1$ and $c_2$. This result can be easily generalised to the case $\Sm_0 \neq \mathbf{I}_d$ since $IW(\ddr \Sigmam; \nu_0, \Sm_0) =\Sm_0^{-1}IW(\ddr \Sigmam; \nu_0, \mathbf{I}_d)$.
			\item[B3.] We know that $\Sigmam \sim IW(\nu_0, \Sm_0)$ and we start by supposing that $\Sm_0 = \mathbf{I}_d$. 
			The joint distribution of the eigenvalues $\lambdav\left(\Sigmam^{-1}\right)$ is known to be equal to
			\begin{equation*}
			f_{\lambdav}(x_1, \dots, x_d) = c_{d,\nu_0} \exp\left\{ - \sum_{j=1}^d \frac{x_j}{2}\right\} \prod_{j=1}^d x_j^{\frac{(\nu_0 -d + 1)}{2}} \prod_{j < k}(x_k - x_j),
			\end{equation*}
			for some normalising constant $c_{d, {\nu_0}}$, if $(x_1,\dots,x_d) \in (0, \infty)^d$ is such that $x_1 \leq \dots \leq x_d$, and equal to $0$ otherwise. It is easy to verify that, on the support of $f_{\lambdav}$,
			\begin{equation*}
			\prod_{j<k} (x_k - x_j) \leq \prod_{j<k} x_k = \prod_{k=2}^d x_k^{k-1}.
			\end{equation*}
			The density function of $\lambda_1(\Sigmam^{-1})$ then becomes 
			\begin{align*}
			f_{\lambda_1}(x_1) &= \int \dots \int f_{\lambdav}(x_1, \dots, x_d) \ddr x_2\cdots \ddr x_d\\
			&\leq c_{d, \nu_0} x_1^{\frac{\nu_0-d+1}{2}}\e^{-\frac{x_1}{2}} \prod_{k=2}^d \int \limits_0^\infty x_k^{\frac{\nu_0-d+1}{2} + k - 1} \e^{-\frac{x_k}{2}}\ddr x_k\\
			&= c^{\prime}_{d,\nu_0} x_1^{\frac{\nu_0-d+1}{2}}\exp\left\{-\frac{x_1}{2}\right\},
			\end{align*}
			for some new normalising constant $c^{\prime}_{d,\nu_0}$.
			Then for any $x>0$ we have 
			\begin{equation*}
			P_{0,2}\left(\lambda_1(\Sigmam^{-1})< \frac{1}{x}\right)\leq c'_{d,\nu_0}\int_0^\frac{1}{x} x_1^{\frac{\nu_0 - d + 1}{2}}dx_1 \lesssim x^{-c_3 x} 
			\end{equation*}
			for some constant $c_3$ and sufficiently large $x$. Again, this result can be generalised to the case $\Sm_0 \neq \mathbf{I}_d$ since $IW(\ddr \Sigmam; \nu_0, \Sm_0) =\Sm_0^{-1}IW(\ddr \Sigmam; \nu_0, \mathbf{I}_d)$.
			\item[B4.] We know that $\Sigmam \sim IW(\nu_0, \Sm_0)$ and we start by considering the case corresponding to $\Sm_0 = \mathbf{I}_d$. 
			We define $Z(\Sigmam^{-1}) = \lambda_d(\Sigmam^{-1})/\lambda_1(\Sigmam^{-1})$ and the function $q(\lambdav(\Sigmam^{-1})) = (\lambda_1(\Sigmam^{-1}), \dots, \lambda_{d-1}(\Sigmam^{-1}), Z(\Sigmam^{-1}))$. Let $J_{q^{-1}}$ denote the Jacobian of the inverse of the function $q$, and observe that
			\begin{equation*}
			f_{\lambda_1,\dots, \lambda_{d-1}, Z}(x_1, \dots, x_{d-1}, z) = |J_{q^{-1}}|
			f_{\lambdav}(x_1, \dots, x_{d-1}, x_1 z).	
			\end{equation*}
			Then, by marginalising with respect to the first $d-1$ components, we obtain
			\begin{align*}
			f_{Z}(z)	&=	\int \cdots \int |J_{q^{-1}}|
			f_{\lambdav}(x_1, \dots, x_{d-1}, x_1 z)	\ddr x_1\cdots \ddr x_{d-1} \\
			&=  \int \cdots \int c_{d,\nu_0} \exp\left\{-\sum_{j=1}^{d-1}\frac{x_j}{2} - \frac{x_1 z}{2} \right\} \prod_{j=1}^{d-1}x_j^{\frac{\nu_0 + 1 - d}{2}} (x_1 z)^{\frac{\nu_0 + 1 - d}{2}}\\
			& \times  \prod_{j<k\leq d-1}(x_k-x_j) \prod_{j=1}^{d-1}(x_1 z - x_j) x_1 \ddr x_1 \cdots \ddr x_{d-1} \\
			& \leq \int \cdots \int  c_{d,\nu_0} \exp\left\{-\sum_{j=1}^{d-1}\frac{x_j}{2} - \frac{x_1 z}{2} \right\} \prod_{j=1}^{d-1}x_j^{\frac{\nu_0 + 1 - d}{2}} (x_1 z)^{\frac{\nu_0 + 1 - d}{2}}\\
			&\times \prod_{k=2}^{d-1}x_k^{k-1} \prod_{j=1}^{d-1}(x_1 z) x_1 \ddr x_1 \cdots \ddr x_{d-1} \\
			&=c_{d,\nu_0}^\prime z^{(\nu_0 + d - 1)/2}\int \exp\left\{ -x_1 \left(\frac{z+1}{2}\right) \right\} x_1^{\nu_0 + 1}\ddr x_1 \\
			&=c_{d,\nu_0}^\prime (\nu_0 + 1)! \left(\frac{2}{z + 1}\right)^{\nu_0 + 2} z^{(\nu_0 + d - 1)/2} \\
			&= c_{d, \nu_0}^{\prime\prime} \frac{z^{(\nu_0 + d - 1)/2}}{(z+1)^{\nu_0 + 2}} \\
			&\leq  c_{d,\nu_0}'' z^{-(\nu_0-d+5)/2},
			\end{align*}
			for some constants $c_{d,\nu_0}$, $c_{d,\nu_0}'$ and $c_{d,\nu_0}''$. 
			Thus we have 
			\begin{align*}
			P_{0,2}\left(Z>x\right)=&\int_x^\infty f_Z(z)\ddr z
			\leq c_{d,\nu_0}^{\prime\prime} \int_{x}^\infty z^{-(\nu_0-d+5)/2}\ddr z
			\lesssim x^{-\kappa},
			\end{align*}
			for sufficiently large $x$, where $\kappa = (\nu_0-d+3)/2>d(d+1)$ by  the assumption that  $\nu_0 > (d+1)(2d-3)$.
		\end{enumerate}
	\end{proof}

	\begin{proof}[Lemma~\ref{lem:f0c}\label{sec:lem2}]
		We assume that $f^*$ satisfies conditions A1--A3 of Theorem~\ref{thm:asym} and check that the same holds for $f^*_g$.
		\begin{enumerate}[topsep=2ex,label=(\roman*)]
			\item[A1.] Assume that $0 < f^*(\xv) < M$ for every $\xv \in\R^d$ and some $M>0$. Then, for every $\xv\in\R^d$, we have $f^*_g(\xv)= |\det(\Cm)|^{-1}f^* (g^{-1}(\xv))$ which implies 
			\begin{equation*}
			0<f^*_g(\xv) < M^\prime= |\det(\Cm)|^{-1}M.
			\end{equation*}
			\item[A2.] Observe that
			\begin{align*}
			\int \|\xv\|^{2(1+\eta)}f_g^*(\xv) \ddr \xv&=\int \|g(\yv)\|^{2(1+\eta)}f_g^*(g(\yv))|\det(\Cm)| \ddr \yv\\
			&=\int \|g(\yv)\|^{2(1+\eta)}f^*(\yv) \ddr \yv\\
			&\leq \int 2^{2(1+\eta)-1} \left(\|\Cm \yv\|^{2(1+\eta)}+\|\bv\|^{2(1+\eta)}\right)f^*(\yv)\ddr \yv,
			\end{align*}
			where the last inequality follows by combining triangular and Jensen's inequalities. Thus we can write
			\begin{align*}
			\int &\|\xv\|^{2(1+\eta)}f_g^*(\xv) \ddr \xv\\
			&\leq 2^{2(1+\eta)-1}\left(|\det(\Cm)|^{2(1+\eta)}\int \| \yv\|^{2(1+\eta)}f^*(\yv)\ddr \yv+\|\bv\|^{2(1+\eta)}\right)<\infty,
			\end{align*}
			where the last inequality follows by assumption A2 on $f^*$.
			\item[A3.] Since function $g$ is a linear invertible transform, the boundedness of the function $\xv\mapsto f^*(\xv)\log^2(\varphi_\delta (\xv))$ carries over to its counterpart defined with the transformed density $f_g^*$.
		\end{enumerate}
	\end{proof}

	\bibliographystyle{apalike}

\end{document}